\PassOptionsToPackage{hyperfootnotes=false}{hyperref}
\documentclass[version=preprint]{iacrcc}

\usepackage[utf8]{inputenc}
\usepackage[english]{babel}
\usepackage{booktabs}
\usepackage{enumitem}
\usepackage{algorithm}
\usepackage[noend]{algpseudocode}

\makeatletter
\providecommand*{\theHALG@line}{}
\renewcommand*{\theHALG@line}{\thealgorithm.\arabic{ALG@line}}
\makeatother

\allowdisplaybreaks[2]

\numberwithin{theorem}{section}
\makeatletter
\let\c@lemma\c@theorem
\let\c@proposition\c@theorem
\let\c@corollary\c@theorem
\let\c@definition\c@theorem
\let\c@problem\c@theorem
\let\c@remark\c@theorem

\let\p@lemma\p@theorem
\let\p@proposition\p@theorem
\let\p@corollary\p@theorem
\let\p@definition\p@theorem
\let\p@problem\p@theorem
\let\p@remark\p@theorem

\makeatother

\newcommand{\F}{\mathbb F}
\newcommand{\Q}{\mathbb Q}
\newcommand{\Z}{\mathbb Z}
\newcommand{\OO}{\mathcal O}
\newcommand{\DD}{\mathcal D}
\newcommand{\PP}{\mathfrak P}
\newcommand{\idealA}{\mathfrak a}
\newcommand{\idealB}{\mathfrak b}
\DeclareMathOperator{\End}{End}

\DeclareMathOperator{\Pic}{Pic}
\DeclareMathOperator{\Trd}{Trd}
\DeclareMathOperator{\Nrd}{Nrd}
\DeclareMathOperator{\cond}{cond}
\DeclareMathOperator{\disc}{disc}
\newcommand{\ceil}[1]{\left\lceil #1\right\rceil}
\newcommand{\floor}[1]{\left\lfloor #1\right\rfloor}
\newcommand{\OneEnd}{\textsc{OneEnd}}
\newcommand{\EndRing}{\textsc{EndRing}}
\newcommand{\Isogeny}{\textsc{Isogeny}}
\newcommand{\Supp}{\operatorname{Supp}}

\title[
  running={The supersingular isogeny problem},
  plaintext={The Supersingular Isogeny Problem in Time and Memory p-to-the-(1/3+o(1)), Unconditionally}
]{The Supersingular Isogeny Problem in Time and Memory
  \(p^{1/3+o(1)}\), Unconditionally}

\addauthor[
  inst={1},
  email={jose@delgado.fyi},
  surname={Delgado}
]{José Luis Delgado}

\addaffiliation{Independent Researcher}

\begin{document}
\maketitle

\begin{abstract}
Given a supersingular elliptic curve \(E/\F_{p^2}\), the \OneEnd\ problem asks
for a non-scalar endomorphism of \(E\).  By known
reductions, solving this problem also solves the supersingular endomorphism
ring and isogeny problems.  Weso{\l}owski obtained exponent \(1/3\) under an
assumption on the factorization of a small degree, whereas the previous
unconditional exponent was \(2/5\).  We give a Las Vegas algorithm, analyzed
without a smoothness heuristic, with expected time and memory
\[
p^{1/3}\exp\bigl(O(\sqrt{\log p\,\log\log p})\bigr)
 =p^{1/3+o(1)}.
\]

The algorithm fixes in advance a family of degrees that are products of small
primes.  Known counting results provide many isogenies of these degrees from
curves to their Frobenius conjugates, and a collision estimate shows that the
isogenies occur on sufficiently many distinct curves for a random walk to
reach one of them.  From such a curve, the algorithm splits a degree into two
parts, enumerates two lists of shorter isogenies, and matches their targets to
obtain an isogeny to the conjugate, whose composition with Frobenius gives the
required endomorphism.
\end{abstract}

\begin{textabstract}
Given a supersingular elliptic curve E over F_(p^2), the OneEnd problem asks
for a non-scalar endomorphism of E. By known reductions, solving this problem
also solves the supersingular endomorphism ring and isogeny problems.
Wesolowski obtained exponent 1/3 under an assumption on the factorization of a
small degree, whereas the previous unconditional exponent was 2/5. We give a
Las Vegas algorithm, analyzed without a smoothness heuristic, with expected
time and memory p^(1/3) exp(O(sqrt(log p log log p))) = p^(1/3+o(1)). The
algorithm fixes in advance a family of degrees that are products of small
primes. Known counting results provide many isogenies of these degrees from
curves to their Frobenius conjugates, and a collision estimate shows that the
isogenies occur on sufficiently many distinct curves for a random walk to
reach one of them. From such a curve, the algorithm splits a degree into two
parts, enumerates two lists of shorter isogenies, and matches their targets to
obtain an isogeny to the conjugate, whose composition with Frobenius gives the
required endomorphism.
\end{textabstract}

\section{Introduction}
\label{sec:introduction}

\subsection{The problem}

For a supersingular elliptic curve \(E/\F_{p^2}\), the \OneEnd\ problem asks
for a non-scalar endomorphism of \(E\).
Unconditional reductions connect
it to the computation of the full ring \(\End(E)\) and of an isogeny between
two supersingular curves~\cite{PW24,MW25}, so an improved algorithm for
\OneEnd\ yields improved algorithms for all three problems.

These problems govern the security of several cryptographic constructions
based on supersingular isogenies, and their generic classical cost was
\(p^{1/2+o(1)}\) for many years.  Weso{\l}owski reduced the exponent to
\(1/3\) under a heuristic about smooth integers~\cite{Wes26}, and Udovenko
later obtained an unconditional exponent of \(2/5\)~\cite{Udovenko26}.  The
question left between these results is:
\[
 \textit{Can the exponent \(1/3\) be reached without a smoothness
 assumption?}
\]
The algorithm developed here attains this exponent unconditionally, using the
same starting observation as both preceding algorithms.  Write \(E^{(p)}\)
for the curve obtained by applying the
\(p\)-power map to the coefficients of \(E\).  Given a separable isogeny
\[
 \varphi:E\longrightarrow E^{(p)}.
\]
The relative Frobenius \(\pi_{E^{(p)}}:E^{(p)}\to E\) then gives
\[
 \pi_{E^{(p)}}\circ\varphi\in\End(E).
\]
The resulting endomorphism has inseparable degree \(p\), whereas an integer
multiplication has inseparable degree with even \(p\)-adic valuation, so the
displayed endomorphism is not scalar.

The remaining task of finding \(\varphi\) begins with the theorem of Aubry,
Oyono, and Vincent, which guarantees such an isogeny of degree at most
\((p/2)^{1/3}\)~\cite{AOV26}, but their existence theorem specifies the size
of the degree and leaves its factorization unspecified.  A large prime factor
makes a direct search too expensive, and Weso{\l}owski handles this obstruction by
assuming that the least possible degree behaves like a random integer and is
smooth often enough.  The argument below attains the same exponent through a
family of degrees with prescribed factorizations.

\subsection{The basic search}

Consider first a degree \(d=\ell_1\ell_2\), where \(\ell_1\) and \(\ell_2\)
are distinct small primes, and suppose that an isogeny
\(\varphi:E\to E^{(p)}\) of degree \(d\) factors through an intermediate
curve \(C\):
\[
 E\xrightarrow{\lambda}C\xrightarrow{\eta}E^{(p)},
 \qquad \deg\lambda=\ell_1,\quad\deg\eta=\ell_2.
\]
With \(\rho=\widehat\eta:E^{(p)}\to C\), a list of the degree \(\ell_1\)
isogenies from \(E\) and a list of the degree \(\ell_2\) isogenies from
\(E^{(p)}\) both contain \(C\), and matching those entries recovers
\[
 \varphi=\widehat\rho\circ\lambda.
\]

The same construction works when a suitable degree \(d\) factors as
\[
 d=uv,\qquad \gcd(u,v)=1,
\]
with \(u\) and \(v\) of comparable size.  For a cyclic isogeny
\(\varphi:E\to E^{(p)}\) of degree \(d\), splitting its prime factors between
\(u\) and \(v\) produces a curve \(C\) and isogenies
\[
 \lambda:E\longrightarrow C,\qquad
 \rho:E^{(p)}\longrightarrow C
\]
of degrees \(u\) and \(v\).  They again recover the original map through
\[
 \varphi=\widehat\rho\circ\lambda.
\]

This factorization gives a direct algorithm: list the isogenies of degree at
most about \(\sqrt d\) from both \(E\) and \(E^{(p)}\), then match entries
whose targets are isomorphic.  Each match recovers \(\varphi\), and the work
is essentially the combined size of the two lists.

The factorization of \(d\) determines the cost of this search: when \(d\) is
a product of small primes, both lists can be constructed by following
isogenies of those prime degrees; an unknown prime factor of size close to
\(d\) requires a search beyond the target cost.

Weso{\l}owski applies this search to the least degree from \(E\) to
\(E^{(p)}\), so its success depends on the factorization of one integer.  Our
construction first chooses a large family \(\mathcal D_p\) in which every
degree is squarefree and smooth, and then proves that many supersingular
curves admit an isogeny to their conjugate whose degree belongs to this
family.

\subsection{Coverage by the chosen degrees}

For one degree \(d\), results of Chenu and Smith count many pairs consisting
of a curve and a cyclic degree \(d\) isogeny to its conjugate~\cite{CS22}.
The number of such pairs is of order \(\sqrt{pd}\), up to logarithmic factors.
Summing this count over \(d\in\mathcal D_p\) gives enough pairs for the
algorithm, provided that they are distributed among sufficiently many
curves.  Concentration on a small set would reduce the probability that a
random walk finds a useful curve, so the analysis must bound how often two
degrees \(d\) and \(e\) occur on the same curve.

For two isogenies
\[
 \varphi,\psi:E\longrightarrow E^{(p)}
\]
of squarefree degrees \(d\) and \(e\), consider the composite
\[
 \gamma=\widehat\varphi\circ\psi\in\End(E)
\]
which has norm \(de\).  Distinct kernels give a non-scalar \(\gamma\), whose
trace \(t\) satisfies
\[
 t^2<4de.
\]
The element \(\gamma\) generates an imaginary quadratic field and determines
the largest quadratic order in that field that embeds in \(\End(E)\) through
\(\gamma\).

The first isogeny becomes an invertible ideal of norm \(d\) in this order,
and the existence of the second isogeny imposes an additional divisibility
condition on that ideal.  Once the ideal is fixed, the common endpoint of the
two isogenies satisfies a square equation in the class group of the order;
every nonempty fiber of this square map has the size of the \(2\)-torsion
subgroup, which genus theory bounds by a divisor function.

Every pair of isogenies therefore yields arithmetic data of the stated form,
and counting all compatible data gives an upper bound without requiring a
converse construction.  Summation over \(t\) and over the possible orders
produces a bound of size \(\sqrt{de}\), apart from logarithmic and local
factors, which prevents the isogenies counted by Chenu and Smith from
concentrating on too few curves.

\subsection{Main result}

\begin{theorem}[Main theorem]\label{thm:intro-main}
Let \(p>3\) be prime.  There is a classical Las Vegas algorithm which, given a
supersingular elliptic curve \(E/\F_{p^2}\), returns a non-scalar endomorphism
\(\alpha\in\End(E)\setminus\Z\), in efficient representation, in expected time
and memory
\[
p^{1/3}\exp\bigl(O(\sqrt{\log p\,\log\log p})\bigr)
 =p^{1/3+o(1)}.
\]
The returned endomorphism satisfies \(\log\deg\alpha=O(\log p)\), and the
algorithm and its analysis are unconditional.
\end{theorem}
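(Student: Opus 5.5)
We follow the outline of the introduction and proceed in four stages: choose the degree family, count isogenies, bound collisions, then run a random walk followed by a meet-in-the-middle search.

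\textbf{Degree family.} Set \(B=\exp\bigl(c\sqrt{\log p\,\log\log p}\bigr)\) and \(X=p^{1/3}\). Let \(\mathcal D_p\) be the set of squarefree \(B\)-smooth integers \(d\) with \(d\le X\), and with a factorization \(d=uv\), \(\gcd(u,v)=1\), \(u,v\le \sqrt{d}\,B\). The last condition holds for every smooth squarefree \(d\): add prime factors greedily to \(u\). By the standard Canfield--Erd\H{o}s--Pomerance estimate, and since squarefreeness costs only a constant factor, \(\#\mathcal D_p\ge X\exp(-O(\sqrt{\log p\log\log p}))\). This estimate is unconditional.

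\textbf{Counting and collisions.} The result of Chenu and Smith~\cite{CS22} gives, for each \(d\in\mathcal D_p\), about \(\sqrt{pd}\) pairs \((E',\varphi)\) with \(\varphi:E'\to E'^{(p)}\) cyclic of degree \(d\), up to \(p^{o(1)}\) factors. Summing over \(d\), the total number \(N\) of pairs satisfies \(N\gtrsim \sqrt p\,X^{3/2}p^{-o(1)}=p\cdot p^{-o(1)}\). Let \(S\) be the set of curves carrying at least one such pair. By Cauchy--Schwarz, \(\#S\ge N^2/M\), where \(M\) is the number of ordered pairs of pairs lying on the same curve. We bound \(M\) by the collision estimate sketched in the introduction. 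For \(d\ne e\), or for distinct kernels, \(\gamma=\widehat\varphi\psi\) has norm \(de\) and trace \(t\) with \(t^2<4de\). Counting the compatible data \((t,\text{order},\text{ideal},2\text{-torsion fiber})\) gives at most \(\sqrt{de}\,p^{o(1)}\) curves per pair \((d,e)\), using divisor-function bounds from genus theory. Summing, \(M\le N+\sum_{d,e}\sqrt{de}\,p^{o(1)}\le N+X^3p^{o(1)}=p^{1+o(1)}\). Hence \(\#S\ge p^{1-o(1)}\), a \(p^{-o(1)}\) fraction of all supersingular curves.

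\textbf{Algorithm and cost.} First take a random walk of length \(O(\log p)\) in the \(2\)-isogeny graph from \(E\). By rapid mixing (Ramanujan property), this reaches a nearly uniform curve \(E_1\), which lies in \(S\) with probability \(p^{-o(1)}\). For each candidate \(E_1\), enumerate from \(E_1\) and from \(E_1^{(p)}\) all isogenies whose degrees are products of distinct primes \(\le B\) and at most \(\sqrt X B\). Store the \(j\)-invariants of the targets in a hash table and look for a match with coprime degrees \(u,v\) whose product lies in \(\mathcal D_p\). A match gives \(\varphi=\widehat\rho\circ\lambda:E_1\to E_1^{(p)}\), and \(\pi\circ\varphi\) is non-scalar because its inseparable degree is \(p\). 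Each list has size about \(\sqrt X B^{O(1)}\cdot\)(number of smooth degrees), that is, \(p^{1/3}e^{O(\sqrt{\log p\log\log p})}\), and each entry can be computed by chaining Vélu steps of degree \(\le B\). We can skip the \(\mathcal D_p\) check: any match with \(\gcd(u,v)=1\) and \(\widehat\rho\lambda\) non-scalar after composing with Frobenius suffices. Transport back to \(E\) by conjugating with the walk, \(\alpha=\widehat w\,\pi\varphi\,w\), which is non-scalar and has \(\log\deg\alpha=O(\log p)\). Correctness of the output can be verified, so the algorithm is Las Vegas. The expected number of restarts is \(p^{o(1)}\), which gives the stated time and memory.

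\textbf{Main obstacle.} The collision bound is the step I expect to be hardest. It requires a careful count over quadratic orders \(\Z[\gamma]\subseteq\mathcal O\subseteq\mathcal O_K\) of discriminant dividing \(t^2-4de\), a bound on the number of embeddings or ideal choices via \(2^{\omega}\)-type divisor functions, and a proof that the summed local factors contribute only \(p^{o(1)}\) on average. I would first sum \(\sum_{|t|<2\sqrt{de}}\tau(4de-t^2)^{O(1)}\) using standard divisor-sum bounds, then handle the class-group square-map fibers through \(\#\Pic(\mathcal O)[2]\le\tau(\disc)\).
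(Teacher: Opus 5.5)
Your global architecture matches the paper's: the Chenu--Smith count, a collision bound through quadratic orders, a second-moment argument, a mixing walk, meet-in-the-middle, Frobenius, and transport back. Two of your choices are sound. Your smooth family would also serve once the collision bound is available, with per-trial success probability \(\exp(-O(\sqrt{\log p\,\log\log p}))\) instead of \(1/\mathrm{poly}(n)\). And you are right that any match works: it gives a separable \(\varphi:E_1\to E_1^{(p)}\) of degree prime to \(p\), so \(\pi_{E_1^{(p)}}\circ\varphi\) has degree of \(p\)-adic valuation one and is non-scalar. This makes the paper's color coding and support checks unnecessary.

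The genuine gap is the collision estimate \(C^+_{d,e}(p)\ll\sqrt{de}\) up to small factors, on which everything rests. You assert it rather than prove it, and your plan covers only its final arithmetic part: summing \(\tau(4de-t^2)\) over \(t\) and bounding \(|\Pic(\OO)[2]|\). What is missing is the injection of pairs of kernels into arithmetic data whose fibers have size \(O(|\Pic(\OO)[2]|)\). That injection needs four ingredients.
\begin{itemize}
\item Squarefreeness of \(d,e\) must force \(\varphi\) to be horizontal for the order \(\OO\) cut out by \(\gamma=\widehat\varphi\circ\psi\). A step of degree \(\ell\) that changed the order at \(\ell\) could not be undone, since \(\ell\) occurs once. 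Then \(\varphi\) is an invertible ideal \(\idealA\) of norm \(d\), and \(\gcd(de,\cond\OO)=1\).
\item The sign identities \(\widehat\varphi=\varphi^{(p)}\) and \(\widehat\psi=\psi^{(p)}\) hold automatically, since \(4d,4e<p\). They give \(\varphi\circ\gamma=\widehat\gamma^{(p)}\circ\varphi\), so \(\varphi\) is an oriented isogeny from \(x=(E,\iota)\) to the conjugated orientation \(Jx\), and \(J(\idealB*x)=\idealB^{-1}*Jx\).
\item Onuki's theorem shows that the relevant primitive oriented curves form at most two free \(\Pic(\OO)\)-torsors.
\item The second kernel is recovered from \((\gamma,\idealA)\) via \(\psi=\varphi\circ\gamma/d\), with bookkeeping for \(\pm\gamma\).
\end{itemize}
The conjugation is the crux. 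It is exactly what turns the endpoint condition \(\idealA*x=Jx\) into the square equation \(\idealB^2=[\idealA]^{-1}c_0\). An untwisted target would instead give \([\idealA]=c_0\), which places no constraint on \(x\) inside the torsor.

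Three further steps need repair.
\begin{itemize}
\item The claim of about \(\sqrt{pd}\) pairs for each \(d\) is not available effectively, because a Siegel zero for \(\Q(\sqrt{-pd})\) cannot be excluded. Siegel's theorem gives only an ineffective \(p^{-\epsilon}\) loss, which cannot produce the explicit factor \(\exp(O(\sqrt{\log p\,\log\log p}))\). You need Tatuzawa's theorem (or Landau--Page) with \(\varepsilon=1/\log p\), discarding at most one degree.
\item Recording every loss as \(p^{o(1)}\) is too coarse for the stated bound. ``\(p^{o(1)}\) restarts'' does not give \(p^{1/3}\exp(O(\sqrt{\log p\,\log\log p}))\); for example, the pointwise bound \(\tau(n)\le\exp(O(\log n/\log\log n))\) is \(p^{o(1)}\) but exceeds that factor. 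Each loss must be quantified, as the paper does with the averaged bound \(6Z\PP(Z)\).
\item The statement requires an efficient representation, but your \(\varphi\) is a chain of V\'elu steps of prime degree up to \(B=\exp(c\sqrt{\log p\,\log\log p})\). Storing and evaluating it costs \(B^{O(1)}\), which is superpolynomial in \(\log p\). You need a conversion step, for instance isogeny interpolation from the images of torsion bases for an auxiliary smooth modulus \(N>\deg\varphi\) coprime to \(p\deg\varphi\).
\end{itemize}
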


\begin{corollary}\label{cor:intro-equivalent}
The supersingular endomorphism ring problem and the supersingular isogeny
problem admit classical probabilistic algorithms with expected time and
memory \(p^{1/3+o(1)}\).
\end{corollary}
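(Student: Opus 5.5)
The corollary follows from Theorem~\ref{thm:intro-main} by composing it with the known unconditional reductions of~\cite{PW24,MW25}. The only point needing care is that the reductions' overhead is at most polynomial in \(\log p\), or at worst \(p^{o(1)}\), so that the exponent \(1/3\) is kept.

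\textbf{Endomorphism ring.} Given \(E\), run the algorithm of Theorem~\ref{thm:intro-main} to get a non-scalar \(\alpha\in\End(E)\). It is in efficient representation and has \(\log\deg\alpha=O(\log p)\). Then invoke the reduction of~\cite{PW24}, which computes \(\End(E)\) from a \OneEnd\ oracle. I will check three features of that reduction:
\begin{itemize}
\item it makes a number of oracle calls polynomial in \(\log p\), possibly on curves connected to \(E\) by short isogenies computed along the way;
\item its remaining work (HNF computations, saturation of a suborder, discriminant reduction via factoring norms of size \(p^{O(1)}\), which takes subexponential time \(p^{o(1)}\)) is \(p^{o(1)}\) in time and memory;
\item the degree bound \(\log\deg\alpha=O(\log p)\) keeps every evaluation of \(\alpha\), and every lattice manipulation, at polynomial cost.
\end{itemize}
The total expected cost is then \(\mathrm{poly}(\log p)\cdot p^{1/3+o(1)}+p^{o(1)}=p^{1/3+o(1)}\). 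Memory can be reused between calls, so it stays \(p^{1/3+o(1)}\).

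\textbf{Isogeny problem.} Given \(E_1,E_2\), compute \(\End(E_1)\) and \(\End(E_2)\) as above. Then apply the unconditional reduction from the isogeny problem to the endomorphism ring problem~\cite{MW25}. This reduction works in polynomial time given maximal-order bases in efficient representation: it builds a connecting ideal, uses KLPT-style or rigorous ideal-to-isogeny translation, and converts the ideal into an isogeny in efficient representation.

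\textbf{Main obstacle.} The only real issue is confirming that the cited reductions are unconditional and run in expected time \(p^{o(1)}\) plus a polynomial number of \OneEnd\ calls, rather than relying on GRH or other heuristics. The abstract asserts this for~\cite{PW24,MW25}, so I rely on it. One further check: the \OneEnd\ calls inside~\cite{PW24} may be made on curves other than the input. Each such curve is again supersingular over \(\F_{p^2}\), so Theorem~\ref{thm:intro-main} applies with the same bound. Expected times add by linearity, which yields the claimed \(p^{1/3+o(1)}\).
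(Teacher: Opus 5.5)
Your proposal is correct and follows the paper's proof. The paper likewise uses the bound \(\log\deg\alpha=O(\log p)\) to solve the bounded problem \(\OneEnd_\lambda\) with \(\lambda(n)=O(n)\), then invokes \cite[Theorem~7.1]{PW24} and \cite[Theorem~1.1]{MW25}, whose overhead is polynomial in \(n\) and \(\lambda(n)\) and is absorbed into \(p^{1/3+o(1)}\). Your guesses about the internals of these reductions (factoring norms, KLPT-style translation) are unnecessary, because the cited theorems are already unconditional probabilistic polynomial-time reductions; as you note, even a \(p^{o(1)}\) overhead would be harmless.
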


The proof rests on an estimate for the distribution of the chosen degrees
among curves.  In the terminology of Chenu and Smith, a
\((d,+1)\)-structure is a separable cyclic isogeny
\(\varphi:E\to E^{(p)}\) of degree \(d\) such that
\(\widehat\varphi=\varphi^{(p)}\).  For a generic model \(E\), let \(r_d(E)\)
count the cyclic kernels that support such an isogeny.  If \(d,e\geq2\) are
squarefree and prime to \(p\), then
\[
 \sum_E r_d(E)r_e(E)
 -\mathbf1_{d=e}\sum_Er_d(E)
 \ll
 \sqrt{de}\,\log^2(2de)
 \prod_{\ell\mid\gcd(d,e)}
 \left(1+\frac3\ell\right).
\]
The implied constant is absolute, the subtraction removes the repeated use
of one kernel, and the product records the extra choices at primes shared by
the two degrees.  Apart from logarithms and the displayed product, the bound
grows like \(\sqrt{de}\), which prevents the isogenies from concentrating on
a few curves.  Its proof assigns arithmetic data to each pair of isogenies
and counts the resulting data; Theorem~\ref{thm:collision} gives the finite
form of the estimate.

\subsection{Algorithmic use of the collision bound}

For \(n=\lfloor\log_2p\rfloor+1\), Section~\ref{sec:density} specifies the
primes and the number of factors in a set \(\mathcal D_p\) of squarefree
degrees chosen before sampling.  Every prime factor is small enough for direct
enumeration, and every member satisfies
\[
 d\leq
p^{1/3}\exp\bigl(O(\sqrt{\log p\,\log\log p})\bigr).
\]

The Chenu--Smith count gives the total number of isogenies for each \(d\),
Tatuzawa's theorem supplies the required lower bound for every degree except
possibly one, and the collision theorem controls how often two counted
isogenies occur on the same curve.  Together these estimates show that an
inverse polynomial proportion of the generic supersingular models supports
the required isogenies, with the numerical bound of
Theorem~\ref{thm:density}.

Using these estimates, each trial of the algorithm performs the following five
steps.

\begin{enumerate}[leftmargin=2em,label=\textup{(\arabic*)}]
\item Move from the input along a lazy walk in the \(2\)-isogeny graph until,
      after \(O(\log p)\) steps, the endpoint is close to the stationary
      distribution.
\item At the endpoint \(E\), list the relevant isogenies from \(E\) and from
      \(E^{(p)}\), and split every candidate degree into two parts of
      comparable size.
\item Match records that reach isomorphic curves, using a coloring of the
      prime factors to enforce disjoint supports without testing every pair
      of records.
\item Compose the matched records to obtain
      \(\varphi:E\to E^{(p)}\), and verify every map and isomorphism used in
      the construction.
\item Form
      \(\widehat\omega\circ\pi_{E^{(p)}}\circ\varphi\circ\omega\), where
      \(\omega\) is the initial walk, and convert the result to an efficient
      representation.
\end{enumerate}

A trial without a match is repeated, whereas every returned map has passed
deterministic checks; randomness therefore affects the running time but not
correctness.

\subsection{Relation to previous algorithms}

The outer construction follows Algorithms~2 and~3 of~\cite{Wes26}, which walk
to a new curve, search from the curve and its conjugate, compose the result
with Frobenius, and return along the walk.  Our algorithm retains these steps
and replaces the heuristic source of a smooth degree with the family
\(\mathcal D_p\) and the collision bound.

Algorithm~1 of~\cite{Udovenko26} uses the same walk, Frobenius, and return
path, and detects one chosen degree through a modular polynomial and bivariate
multipoint evaluation.  Our argument uses the Chenu--Smith count in the form
developed there.

The theorem is asymptotic, with a large subexponential factor and memory of
the same order as the running time; practical attack costs and concrete
security levels lie outside its scope.  The reduction to \Isogeny\ permits
an unrestricted output degree, while a prescribed prime degree lies outside
the statement.

\subsection{Organization}

\Cref{sec:prelim} fixes the computational model and notation;
\cref{sec:collisions} proves the collision bound; \cref{sec:density}
constructs \(\mathcal D_p\) and proves that many curves support its degrees;
and \cref{sec:algorithm} derives the algorithm and its running time from that
density statement.

\section{Preliminaries}
\label{sec:prelim}

\subsection{Elliptic curves and isogenies}

The counting argument takes place over \(\F_{p^2}\), with any other field
displayed explicitly, and all endomorphism rings are geometric.  The choice
of input model in \cref{sec:input-normalization} is the sole use of an
extension, whose degree is at most two.

An isogeny \(\varphi:E\to E'\) is a nonconstant rational homomorphism of
groups, and its degree is the degree of the rational map.  It is
\emph{separable} when \(\deg\varphi=\#\ker\varphi\), and \emph{cyclic} when
its geometric kernel is cyclic.  Each isogeny has a unique dual
\(\widehat\varphi:E'\to E\) satisfying
\[
 \widehat\varphi\circ\varphi=[\deg\varphi],
 \qquad
 \varphi\circ\widehat\varphi=[\deg\varphi].
\]
Maps are composed from right to left, so \(\psi\circ\varphi\) applies
\(\varphi\) first.

An elliptic curve in characteristic \(p\) is supersingular if it has no
nonzero geometric \(p\)-torsion; it then has a model over \(\F_{p^2}\), and
its endomorphism ring is a maximal order in the quaternion algebra over
\(\Q\) ramified at \(p\) and infinity.  We write
\[
 \End^0(E)=\End(E)\otimes_{\Z}\Q.
\]
The map \(a\mapsto[a]\) identifies \(\Z\) with the center of \(\End(E)\),
whose elements are the \emph{scalar} endomorphisms; an element of
\(\End(E)\setminus\Z\) is \emph{non-scalar}.  On \(\End(E)\), the degree is
the reduced norm, and the reduced trace of \(u\) is \(u+\widehat u\).

\subsection{Efficient representations and computational problems}

Because the degree of an isogeny may be exponential in the input length,
writing the whole rational map can require too much space, so we use the output
convention of~\cite{Wes26,MW25}.

\begin{definition}[Efficient representation]\label{def:efficient}
An algorithm \(\mathsf A\) that runs in polynomial time is an \emph{efficient
isogeny evaluator} if, for every \(D\in\{0,1\}^*\) such that
\(\mathsf A(\mathsf{validity},D)=\top\), there is an isogeny
\(\varphi:E\to E'\), defined over a finite field \(\F_q\), such that
\begin{enumerate}[label=(\arabic*),leftmargin=2.4em,itemsep=0pt,topsep=2pt]
\item \(\mathsf A(\mathsf{curves},D)=(E,E')\);
\item \(\mathsf A(\mathsf{degree},D)=\deg\varphi\);
\item for \(P\in E(\F_{q^k})\),
      \(\mathsf A(\mathsf{eval},D,P)=\varphi(P)\).
\end{enumerate}
If, moreover, the length of \(D\) is polynomial in
\(\log\deg\varphi+\log q\), then \(D\) is an \emph{efficient
representation} of \(\varphi\) with respect to \(\mathsf A\).
\end{definition}

During the search, a map is stored as a chain of isogenies of small degree and
converted by isogeny interpolation into an efficient representation before
it is returned, as proved in \cref{sec:output-representation}.

As in~\cite{Wes26}, all isogenies returned in the following computational
problems are encoded in an efficient representation.

\begin{problem}[\OneEnd]\label{prob:one-end}
Given a supersingular elliptic curve \(E\) defined over \(\F_{p^2}\), find an
endomorphism in \(\End(E)\setminus\Z\).
\end{problem}

For a function \(\lambda\), the bounded problem \(\OneEnd_\lambda\) of
\cite{MW25} further requires
\(\log\deg\alpha\leq\lambda(\log p)\).

\begin{problem}[\EndRing]\label{prob:end-ring}
Given a supersingular elliptic curve \(E\) defined over \(\F_{p^2}\), find
four endomorphisms generating \(\End(E)\) as a \(\Z\)-module.
\end{problem}

\begin{problem}[\Isogeny]\label{prob:isogeny}
Given supersingular elliptic curves \(E\) and \(E'\) defined over
\(\F_{p^2}\), compute an isogeny \(\varphi:E\to E'\).
\end{problem}

Page--Weso{\l}owski and Herl\'edan Le Merdy--Weso{\l}owski give probabilistic
reductions between these three problems~\cite{PW24,MW25}.  The reductions run
in polynomial time and are unconditional, so the new complexity bound may be
proved for \OneEnd\ and then transferred to the other two problems.

\paragraph{Computational model.}
We use a word RAM model with word size \(\Theta(\log p)\), in which field
operations in \(\F_{p^2}\) and memory accesses have unit cost when only the
exponent of \(p\) is at issue.  We retain polynomial factors in \(\log p\) and every
subexponential factor that is displayed.  Isogenies of small prime degree can
be evaluated with V\'elu formulas or modular polynomials, and every table
constructed below has size within the stated complexity bound.  Throughout
the paper, \(\log\) is the natural logarithm and \(\log_2\) has base two.

\subsection{Models, Frobenius, and isogenies to the conjugate}

For a prime \(p>3\) and each supersingular invariant \(j\), fix a model
\(E_j/\F_{p^2}\) on which the \(p^2\)-Frobenius is \([-p]\); this model is
maximal over \(\F_{p^2}\), since
\(\#E_j(\F_{p^2})=(p+1)^2\).  If \(j\notin\{0,1728\}\), this condition selects
one of the two quadratic twists over \(\F_{p^2}\); such invariants are called
\emph{generic}.

As in~\cite{Wes26,Udovenko26}, \(E^{(p)}\) denotes the curve obtained by
applying the \(p\)-power map to every coefficient of \(E\).  Following
Udovenko, write
\[
 \pi_E:E\longrightarrow E^{(p)},
 \qquad \pi_{E^{(p)}}:E^{(p)}\longrightarrow E
\]
for the two relative Frobenius maps.  On the models just fixed, they satisfy
\begin{equation}\label{eq:frobenius-square}
 \pi_{E^{(p)}}\circ\pi_E=[-p],\qquad
 \widehat{\pi_{E^{(p)}}}=-\pi_E.
\end{equation}

\paragraph{Comparison of notation.}
The same maps have different names in the two references: our
\(\varphi:E\to E^{(p)}\) is denoted by \(\phi\) in~\cite{Wes26} and by
\(\psi\) in~\cite{Udovenko26}, while our \(\pi\) is denoted by \(\varphi\)
in~\cite{Wes26} and by \(\pi\) in~\cite{Udovenko26}.  We use \(\omega\) for
the initial walk, as in~\cite{Wes26}; in particular,
\(\pi_{E^{(p)}}\circ\varphi\) is the map written
\(\varphi\circ\phi\) in~\cite{Wes26} and
\(\pi\circ\psi\) in~\cite{Udovenko26}.

For an endomorphism $u$, write
\[
 \Trd(u)=u+\widehat u\in\Z,
 \qquad \Nrd(u)=\deg u.
\]
The degree is a positive definite quadratic form with polar form
\(B(u,v)=\Trd(\widehat u\circ v)\), and hence
\begin{equation}\label{eq:CS-degree}
 |\Trd(\widehat u\circ v)|\leq2\sqrt{\deg u\deg v}.
\end{equation}

\begin{definition}[Chenu--Smith structure]
Let \(d\) be prime to \(p\) and let \(\varepsilon\in\{\pm1\}\).  A pair
\((E,\varphi)\), where
\(\varphi:E\to E^{(p)}\) is a separable cyclic isogeny of degree \(d\), is a
\emph{$(d,\varepsilon)$-structure} if
\begin{equation}\label{eq:structure-sign}
 \widehat\varphi=\varepsilon\varphi^{(p)}.
\end{equation}
All subsequent counts use \(\varepsilon=+1\); on a generic maximal model, the
maps \(\varphi\) and \(-\varphi\) have the same kernel, which the counts
include once.
\end{definition}

Let $r_d(E)$ be the number of cyclic kernels supporting a
$(d,+1)$-structure on $E$.
For squarefree $d,e$ coprime to $p$, define the ordered collision count
\begin{equation}\label{eq:collision-definition}
 C^+_{d,e}(p)=
 \sum_E r_d(E)r_e(E)
 -\mathbf 1_{d=e}\sum_Er_d(E).
\end{equation}
The sum runs over the generic maximal models fixed above.  When \(d=e\), the
subtracted term removes the pair formed from one kernel twice, while two
distinct kernels of the same degree still contribute.

\subsection{Signs of small conjugate isogenies}

The sign used in the counts is forced for every sufficiently small isogeny to
the conjugate.  Indeed, the completion of \(\End(E)\) at \(p\) is the maximal
order in the quaternion division algebra over \(\Q_p\), and every inseparable
endomorphism belongs to its maximal two-sided ideal.  Its reduced trace
therefore lies in \(p\Z_p\); see~\cite[Chapter~13]{Voight21}.

\begin{lemma}\label{lem:trace-divisible}
If $u\in\End(E)$ is inseparable, then $p\mid\Trd(u)$.
\end{lemma}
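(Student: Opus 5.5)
We prove the statement through the local structure of \(\End(E)\) at \(p\), as the paragraph preceding the lemma suggests. Put \(\OO=\End(E)\) and \(\OO_p=\OO\otimes_{\Z}\Z_p\). Since \(\End^0(E)\) is ramified at \(p\), the completion \(B_p=\End^0(E)\otimes\Q_p\) is the quaternion division algebra over \(\Q_p\). Its unique maximal order is \(\{x:\nu(\Nrd x)\geq0\}\), where \(\nu\) is the \(p\)-adic valuation, and \(\OO_p\) equals this order because \(\OO\) is maximal. This order has the unique maximal two-sided ideal \(\PP=\{x:\nu(\Nrd x)\geq1\}\). We will invoke these standard facts from~\cite[Chapter~13]{Voight21}.

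First we show that an inseparable \(u\) has \(p\mid\deg u=\Nrd(u)\). An inseparable isogeny factors through the \(p\)-power Frobenius, so \(u=\psi\circ\pi_E\) with \(\pi_E:E\to E^{(p)}\) of degree \(p\). Hence \(\deg u=p\deg\psi\), and therefore \(u\in\PP\).

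Next we show \(\Trd(\PP)\subseteq p\Z_p\). Every \(x\in B_p\) satisfies the characteristic polynomial \(X^2-\Trd(x)X+\Nrd(x)\). For \(x\in\PP\) we can argue in either of two ways.

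\emph{Structure of \(\PP\).} Write \(\PP=\Pi\OO_p\), where \(\Pi\) is a uniformizer with \(\Pi^2=p\cdot(\text{unit})\). Take the standard presentation \(\OO_p=\Z_{p^2}+\Pi\Z_{p^2}\), where \(\Z_{p^2}\) is the unramified quadratic extension ring, \(\Pi^2=p\), and \(\Pi a=\bar a\Pi\). Then every \(x\in\PP\) has the form \(x=pa+\Pi b\) with \(a,b\in\Z_{p^2}\). We have \(\Trd(pa)=p\,\mathrm{Tr}(a)\in p\Z_p\). Also \(\Trd(\Pi b)=0\), because \(\Pi b\) anticommutes appropriately: its square is \(\Pi b\Pi b=\Pi^2\bar b b=p\,\mathrm N(b)\in\Z_p\), and an element outside the centre whose square lies in the centre has trace zero.

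\emph{Valuation argument.} The roots of the characteristic polynomial live in the quadratic field \(\Q_p(x)\). Extend \(\nu\) to this field. Both roots have valuation \(\nu(\Nrd x)/2>0\), because in a local field conjugates have equal valuation. So the trace, being the sum of the roots, has positive valuation. Since the trace lies in \(\Z_p\), it lies in \(p\Z_p\). This works uniformly even when \(x\) is a scalar in \(\Q_p\).

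The second argument is cleaner, and we will use it.

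Finally, \(\Trd(u)\) is an ordinary integer: indeed \(\Trd(u)=u+\widehat u\in\Z\). Together with \(\Trd(u)\in p\Z_p\), this gives \(p\mid\Trd(u)\).

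The main subtlety lies in the first step. We must confirm that \(u\) inseparable forces \(p\mid\deg u\). Since the inseparable degree is a power of \(p\) greater than \(1\), this is immediate. We must also avoid assuming the integrality of the roots in the second step. We handle this by working with valuations in \(\Q_p(x)\), which is legitimate because \(\nu(\Nrd x)\) equals the valuation of the product of the roots.
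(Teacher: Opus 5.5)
Your proof is correct and follows the same route as the paper, whose justification (the paragraph preceding the lemma, citing Voight) is that $\End(E)\otimes\Z_p$ is the maximal order of the local quaternion division algebra, that inseparable endomorphisms lie in its maximal two-sided ideal, and that elements of this ideal have reduced trace in $p\Z_p$. You supply the details the paper leaves to the reference: $p\mid\deg u$ via factorization through Frobenius, and the equal-valuation argument for the conjugate roots of the reduced characteristic polynomial, which is where the division-algebra hypothesis is genuinely used.
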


\begin{proposition}[Sign of small conjugate isogenies]
\label{prop:automatic}
Let $\varphi:E\to E^{(p)}$ be separable and cyclic of degree $d$.  If $4d<p$,
then $(E,\varphi)$ is a $(d,+1)$-structure; equivalently,
$\widehat\varphi=\varphi^{(p)}$.
\end{proposition}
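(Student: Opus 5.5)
Consider the endomorphism
\[
 u=\pi_{E^{(p)}}\circ\varphi\in\End(E),
\]
which is inseparable, since it has Frobenius as a factor, and has degree \(pd\). By Lemma~\ref{lem:trace-divisible}, \(p\mid\Trd(u)\). On the other hand, \eqref{eq:CS-degree} with \(u\) and the identity gives \(|\Trd(u)|\leq2\sqrt{pd}\). The hypothesis \(4d<p\) yields \(2\sqrt{pd}<p\), so \(\Trd(u)=0\). Hence \(\widehat u=-u\), and the plan is to translate this into the required identity between \(\widehat\varphi\) and \(\varphi^{(p)}\).

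Next I compute \(\widehat u\). By \eqref{eq:frobenius-square},
\[
 \widehat u=\widehat\varphi\circ\widehat{\pi_{E^{(p)}}}=-\widehat\varphi\circ\pi_E .
\]
Thus \(\widehat u=-u\) becomes
\[
 \widehat\varphi\circ\pi_E=\pi_{E^{(p)}}\circ\varphi .
\]
To compare the right side with \(\varphi^{(p)}\), I use the naturality of relative Frobenius: for any isogeny \(\psi:A\to B\), one has \(\pi_B\circ\psi=\psi^{(p)}\circ\pi_A\), where the coefficient-wise Frobenius map is functorial. Apply this with \(\psi=\varphi:E\to E^{(p)}\), \(A=E\), \(B=E^{(p)}\). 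Here \(\pi_B\) is the map \(E^{(p)}\to E^{(p^2)}=E\), which is exactly \(\pi_{E^{(p)}}\), since the model is over \(\F_{p^2}\) and \(E^{(p^2)}=E\). This gives
\[
 \pi_{E^{(p)}}\circ\varphi=\varphi^{(p)}\circ\pi_E .
\]
Therefore \(\widehat\varphi\circ\pi_E=\varphi^{(p)}\circ\pi_E\). Since \(\pi_E\) is surjective on points, it is an epimorphism of group schemes, and cancelling it gives \(\widehat\varphi=\varphi^{(p)}\). So \((E,\varphi)\) is a \((d,+1)\)-structure.

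The main point needing care is the sign and naturality bookkeeping. First, the relation \(\widehat{\pi_{E^{(p)}}}=-\pi_E\) holds on the chosen maximal models, and this is where the sign \(+1\) rather than \(-1\) comes from. Second, the conjugate of the isogeny \(\pi_{E^{(p)}}\) matches the labelling used, which requires checking that \(E^{(p^2)}=E\) literally on the model. I would verify naturality directly from the definition: \(\pi\) raises coordinates to the \(p\)-th power, and \(\varphi^{(p)}\) has its coefficients raised to the \(p\)-th power, so \((\varphi(P))^{(p)}=\varphi^{(p)}(P^{(p)})\).
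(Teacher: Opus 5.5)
Your proposal is correct and follows the paper's proof essentially step for step. You kill $\Trd(\pi_{E^{(p)}}\circ\varphi)$ using $p$-divisibility together with the bound $2\sqrt{pd}<p$, then combine $\widehat{\pi_{E^{(p)}}}=-\pi_E$ with naturality of Frobenius to get $\widehat\varphi\circ\pi_E=\varphi^{(p)}\circ\pi_E$, and finally cancel the surjective factor $\pi_E$.
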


\begin{proof}
The map \(\alpha=\pi_{E^{(p)}}\circ\varphi\) is inseparable of degree \(pd\),
so Lemma~\ref{lem:trace-divisible} gives \(p\mid\Trd(\alpha)\), while
Inequality~\eqref{eq:CS-degree} gives
\[
 |\Trd(\alpha)|\leq2\sqrt{pd}<p.
\]
The only multiple of \(p\) in this interval is zero, so
\(\Trd(\alpha)=0\) and \(\widehat\alpha=-\alpha\).
Equation~\eqref{eq:frobenius-square} and the naturality of Frobenius now give
\(\widehat\varphi\circ\pi_E=\varphi^{(p)}\circ\pi_E\); the common right
factor \(\pi_E\) is surjective, so
\(\widehat\varphi=\varphi^{(p)}\).
\end{proof}

\subsection{Orientations and ideal isogenies}

The collision proof uses the following part of the correspondence between
isogenies and ideals.

Let \(K\) be an imaginary quadratic field, and let \(\OO\subset K\) be an
order.  An \(\OO\)-orientation of \(E\) is an embedding
\(\iota:\OO\hookrightarrow\End(E)\).  The orientation is \emph{primitive},
also called \emph{optimal}, if
\[
 \iota(K)\cap\End(E)=\iota(\OO).
\]
An isomorphism from \((E,\iota)\) to \((E',\iota')\) is an isomorphism
\(f:E\to E'\) such that
\(f\circ\iota(a)=\iota'(a)\circ f\) for every \(a\in\OO\).

If \(\idealA\) is an invertible integral \(\OO\)-ideal prime to \(p\), its
kernel subgroup is
\[
 E[\idealA]
 =
 \{P:\iota(a)P=0\text{ for every }a\in\idealA\}.
\]
The quotient by this subgroup has degree \(N(\idealA)\), and its target
inherits an \(\OO\)-orientation.  Because principal ideals preserve the
oriented isomorphism class, this construction defines an action of
\(\Pic(\OO)\).  Onuki's reduction theorem shows that the primitive
orientations needed below form at most two free torsors under this
group~\cite{Onuki20}, as used in \cref{sec:class-square}.

An oriented isogeny is \emph{horizontal} if its source and target have the
same order.  A horizontal cyclic isogeny whose degree is prime to the
conductor is represented by an invertible ideal.  Below, this correspondence
is applied in the forward direction to isogenies that already exist.

\subsection{The quadratic order attached to a pair}

Each pair counted by \eqref{eq:collision-definition} determines a quadratic
order as follows.  Let
\(\varphi,\psi:E\to E^{(p)}\) define \((d,+1)\)- and
\((e,+1)\)-structures, where \(d\) and \(e\) are squarefree.  Put
\begin{equation}\label{eq:gamma}
 \gamma=\widehat\varphi\circ\psi,\qquad
 t=\Trd(\gamma),\qquad
 \Delta=4de-t^2.
\end{equation}
Since the reduced norm of \(\gamma\) is \(de\),
\begin{equation}\label{eq:gamma-poly}
 \gamma^2-t\gamma+de=0.
\end{equation}

\begin{lemma}\label{lem:non-scalar-pair}
The subtracted pair in~\eqref{eq:collision-definition} is precisely the case
in which \(\gamma\) is scalar.  Every nonsubtracted pair has \(\Delta>0\).
\end{lemma}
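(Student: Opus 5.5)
We need to show two things: \(\gamma=\widehat\varphi\circ\psi\) is scalar exactly when the pair is the subtracted diagonal pair, and otherwise \(\Delta=4de-t^2>0\). The plan is to translate scalarity of \(\gamma\) into a statement about kernels, and then use the degree form for the inequality.

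First suppose the pair is subtracted, so \(d=e\) and \(\psi\) has the same cyclic kernel as \(\varphi\). Two separable isogenies with the same kernel differ by an automorphism of the target. On a generic model \(\operatorname{Aut}(E^{(p)})=\{\pm1\}\), so \(\psi=\pm\varphi\). Hence \(\gamma=\pm\widehat\varphi\circ\varphi=[\pm d]\) is scalar.

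Conversely, suppose \(\gamma=[m]\) for some \(m\in\Z\).
\begin{itemize}
\item Taking norms gives \(m^2=de\).
\item Since \(d\) and \(e\) are squarefree, \(de\) is a square only if \(d=e\). Indeed, every prime dividing exactly one of \(d,e\) appears to the first power in \(de\). So \(d=e\) and \(m=\pm d\).
\item Then \(\widehat\varphi\circ\psi=[\pm d]=\pm\widehat\varphi\circ\varphi\), so \(\widehat\varphi\circ(\psi\mp\varphi)=0\). Since \(\widehat\varphi\) is nonconstant, \(\psi=\pm\varphi\).
\item Therefore \(\ker\psi=\ker\varphi\), which is exactly the subtracted term in~\eqref{eq:collision-definition}.
\end{itemize}
The bookkeeping should be stated carefully: \(r_d(E)\) counts kernels, and the subtraction removes \(\sum_E r_d(E)\), one pair per kernel. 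So "subtracted pair" means a pair of equal kernels, and the identification of isogeny with kernel up to sign uses genericity.

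For the inequality, apply~\eqref{eq:CS-degree} to \(u=\varphi\), \(v=\psi\). Strictly, \(\varphi\) and \(\psi\) lie in \(\operatorname{Hom}(E,E^{(p)})\) rather than \(\End(E)\). The degree is still a positive definite quadratic form on this lattice, with polar form \(\Trd(\widehat u\circ v)\), so Cauchy--Schwarz gives \(t^2\le 4de\).

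Equality in Cauchy--Schwarz for a definite form forces \(\psi\) and \(\varphi\) to be \(\Q\)-proportional. Then \(\gamma=\widehat\varphi\circ\psi\) is a rational multiple of \(\widehat\varphi\circ\varphi=[d]\), so it is scalar. Equivalently, \(\Delta\) is the discriminant (up to sign) of~\eqref{eq:gamma-poly}. If \(\Delta=0\), then \(\gamma\) has a double rational root, and since \(\End^0(E)\) is a division algebra, \(\gamma\) equals that rational root and is scalar. By the first part, a scalar \(\gamma\) only occurs for subtracted pairs, so every nonsubtracted pair has \(\Delta>0\).

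The main obstacle is the converse step, getting from "\(\gamma\) scalar" to "same kernel". It relies on the squarefreeness argument forcing \(d=e\) and on the generic automorphism group being \(\{\pm1\}\).
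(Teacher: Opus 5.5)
Your proof is correct and follows essentially the same route as the paper. Squarefreeness forces \(d=e\) and \(\gamma=[\pm d]\), cancellation gives \(\psi=\pm\varphi\) and hence equal kernels (the converse uses \(\{\pm1\}\) automorphisms on generic curves), and the strict Cauchy--Schwarz inequality for the definite degree form gives \(t^2<4de\) when \(\gamma\) is non-scalar; the paper cancels via \(\varphi\circ\gamma=[d]\circ\psi\) instead of \(\widehat\varphi\), which is a cosmetic difference.
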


\begin{proof}
If \(\gamma=[a]\), then \(a^2=de\), and the squarefreeness of \(d\) and \(e\)
implies \(d=e\) and \(a=\pm d\).  The identity
\(\varphi\circ\gamma=[d]\circ\psi\) then gives
\(\psi=\pm\varphi\), so the kernels coincide; conversely, equal kernels give
this scalar case.  For non-scalar \(\gamma\), the strict form of
\eqref{eq:CS-degree} gives \(t^2<4de\).
\end{proof}

For a nonsubtracted pair, let
\[
 K=\Q(\theta),\qquad \theta^2-t\theta+de=0,
 \qquad \iota(\theta)=\gamma,
\]
and define the order selected by this embedding:
\begin{equation}\label{eq:optimal-order}
 \OO=\iota^{-1}\bigl(\End(E)\cap\iota(K)\bigr).
\end{equation}
This order is primitive by definition, and if
\(h=[\OO:\Z[\theta]]\) and \(D=\disc(\OO)<0\), then
\begin{equation}\label{eq:disc-relation}
 -\Delta=h^2D.
\end{equation}
The criterion for the existence of a primitive orientation implies that
\(p\) does not split in \(K\) and that
\(p\nmid\cond(\OO)\)~\cite[Proposition~3.2]{Onuki20}.

Let \(c\) be the nontrivial automorphism of \(K/\Q\).  The two identities with
sign \(+1\) give
\begin{equation}\label{eq:intertwining}
 \varphi\circ\gamma=[d]\circ\psi
 =\widehat\gamma^{(p)}\circ\varphi.
\end{equation}
Thus \(\varphi\) is an oriented isogeny from \(x=(E,\iota)\) to
\begin{equation}\label{eq:J}
 Jx=(E^{(p)},\iota^{(p)}\circ c).
\end{equation}

\begin{lemma}[Order preservation for squarefree degrees]\label{lem:horizontal}
The oriented isogeny $\varphi:x\to Jx$ is horizontal, is represented by an
invertible integral $\OO$-ideal $\idealA$ of norm $d$, and satisfies
\begin{equation}\label{eq:conductor-coprime}
 \gcd(de,\cond(\OO))=1.
\end{equation}
\end{lemma}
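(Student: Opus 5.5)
The plan is to reduce to isogenies of prime degree, using the squarefreeness of \(d\), and to track the conductor of the primitive order along the chain.

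\textbf{Step 1: source and target have the same order.} The map \(u\mapsto u^{(p)}\) is a ring isomorphism \(\End(E)\to\End(E^{(p)})\). The conjugation \(c\) maps every order of \(K\) to itself. Hence the primitive order of \(Jx=(E^{(p)},\iota^{(p)}\circ c)\) is again \(\OO\). By~\eqref{eq:intertwining}, \(\varphi\) is an oriented isogeny \(x\to Jx\) between two primitively \(\OO\)-oriented curves.

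\textbf{Step 2: factor \(\varphi\) into prime steps.} Because \(d\) is squarefree and prime to \(p\), the kernel \(\ker\varphi\) is cyclic and splits as a direct sum of subgroups of prime order \(\ell\mid d\). This gives a factorization
\[
 \varphi=\varphi_{\ell_k}\circ\cdots\circ\varphi_{\ell_1},\qquad \deg\varphi_{\ell_i}=\ell_i,
\]
with the \(\ell_i\) distinct. Push the orientation forward along each step: \(\iota_{i+1}(a)=\varphi_{\ell_i}\iota_i(a)\widehat\varphi_{\ell_i}/\ell_i\) on the image order. Then \(\ell_i\OO_i\subset\OO_{i+1}\) and \(\ell_i\OO_{i+1}\subset\OO_i\) inside \(K\). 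Consequently the conductors satisfy \(\cond(\OO_{i+1})/\cond(\OO_i)\in\{1,\ell_i,\ell_i^{-1}\}\); this is the usual ascending/horizontal/descending trichotomy, as in Onuki or Kohel.

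\textbf{Step 3: every step is horizontal.} In particular, step \(i\) changes only the \(\ell_i\)-adic valuation of the conductor. Since the primes are distinct, \(v_\ell\) of the final conductor differs from \(v_\ell(\cond\OO)\) exactly by the change made at the unique \(\ell\)-step. The final order is \(\OO\) by Step 1, so every step is horizontal.

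\textbf{Step 4: \(\gcd(d,\cond(\OO))=1\) and the ideal.} Here I would invoke the standard structure at a prime \(\ell\mid\cond(\OO)\), with \(\ell\neq p\): among the \(\ell+1\) isogenies of degree \(\ell\) from a primitively \(\OO\)-oriented curve, one ascends and \(\ell\) descend, and none is horizontal. This is the local counting of sublattices of \(\OO\otimes\Z_\ell\)-modules, again in Onuki. Hence \(\ell\nmid\cond(\OO)\) for every \(\ell\mid d\). For such \(\ell\), a horizontal \(\ell\)-isogeny has kernel \(E[\mathfrak l]\) for an invertible prime ideal \(\mathfrak l\) of norm \(\ell\). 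The product \(\idealA=\prod\mathfrak l_i\) of these ideals, read on \(x\) through the action, is an invertible integral ideal of norm \(d\) with \(E[\idealA]=\ker\varphi\). Cyclicity excludes any factor \(\ell\OO\).

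\textbf{Step 5: the divisibility by \(e\).} I would run the same argument with the roles of \(\varphi\) and \(\psi\) exchanged. The element \(\widehat\psi\circ\varphi=\widehat\gamma=t-\gamma\) generates the same subring \(\iota(K)\cap\End(E)\), so its optimal order is again \(\OO\), with \(\theta\) replaced by \(c(\theta)\). The \((e,+1)\) identity gives the analogue of~\eqref{eq:intertwining} for \(\psi\). Repeating Steps 1–4 yields \(\gcd(e,\cond\OO)=1\), and hence~\eqref{eq:conductor-coprime}.

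\textbf{Main obstacle.} I expect the hardest step to be justifying the local facts rigorously in the oriented supersingular setting: the conductor change is at most one factor of \(\ell\), and no horizontal \(\ell\)-isogeny exists when \(\ell\) divides the conductor. I would try first to derive both from the \(\ell\)-adic lattice description of \(\End(E)\otimes\Z_\ell\cong M_2(\Z_\ell)\) together with Onuki's results.
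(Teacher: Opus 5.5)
Your proposal is correct and follows the paper's proof: you factor \(\varphi\) into steps of distinct prime degree, note that an \(\ell\)-step changes only the \(\ell\)-part of the conductor, use that both ends have order \(\OO\) to force every step to be horizontal, invoke Onuki for the ideal of norm \(d\), and exchange \(\varphi\) and \(\psi\) to handle \(e\). Your Step~4 also states explicitly a fact the paper uses without comment: no horizontal \(\ell\)-isogeny exists when \(\ell\mid\cond(\OO)\), and this is what yields \(\gcd(d,\cond(\OO))=1\).
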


\begin{proof}
Factor \(\varphi\) into steps of prime degree, and transport the orientation
through the factorization.  A step of degree \(\ell\) can change the order
only at \(\ell\), because after \(\ell\) is inverted, conjugation by that step
identifies the two endomorphism rings.  Since \(d\) is squarefree, there is at
most one step of degree \(\ell\), while every other step preserves the
localization at \(\ell\).  The orders at the start and at the end are both
\(\OO\), so the step of degree \(\ell\) must preserve that localization as
well.  Applying this argument to each \(\ell\mid d\) shows that every step is horizontal,
and \(\varphi\) is represented by an invertible ideal of norm
\(d\)~\cite[Section~3]{Onuki20}.  Applying the same argument after exchanging
\(\varphi\) and \(\psi\) proves~\eqref{eq:conductor-coprime}.
\end{proof}

\section{The collision theorem}
\label{sec:collisions}

A pair \((\varphi,\psi)\) as in \cref{sec:prelim} determines the order
\(\OO\), the element \(\theta\), and an ideal \(\idealA\) of norm \(d\).  The
existence of \(\psi\) restricts the possible ideals \(\idealA\), and the
endpoint condition becomes a square equation in \(\Pic(\OO)\), whose fibers
are bounded by genus theory.  An elementary divisor estimate then sums these
bounds over the trace and the order, yielding an upper bound for the number
of curves on which two degrees occur.

\subsection{The compatibility condition}

Let \(\idealA\) be the invertible \(\OO\)-ideal of norm \(d\) associated with
\(\varphi\) by Lemma~\ref{lem:horizontal}.  The relation
\(\idealA\overline{\idealA}=d\OO\) identifies the subgroup
\(E[\idealA]\subset E[d]\) with \(\overline{\idealA}/d\OO\), after a choice of
module coordinates.

\begin{lemma}[Restriction from the second isogeny]
\label{lem:compatibility}
For every collision pair,
\begin{equation}\label{eq:ideal-compatibility}
 (\theta)\subseteq\overline{\idealA}.
\end{equation}
\end{lemma}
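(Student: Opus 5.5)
The plan is to read off the containment from the kernel of \(\widehat\gamma\), and then turn the statement about kernels into a statement about ideals using the module description of \(E[d]\) given just before the lemma.

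\emph{Step 1: \(\widehat\gamma\) kills \(E[\idealA]\).} Since \(\gamma=\widehat\varphi\circ\psi\), we have \(\widehat\gamma=\widehat\psi\circ\varphi\). Hence \(\ker\varphi\subseteq\ker\widehat\gamma\). By Lemma~\ref{lem:horizontal}, \(\ker\varphi=E[\idealA]\). Also \(\widehat\gamma=\iota(\overline\theta)\), because \(\iota(\theta)=\gamma\) and dualization restricts to complex conjugation on \(\iota(K)\). So \(\iota(\overline\theta)\) annihilates \(E[\idealA]\). Note that \(\overline\theta=t-\theta\) lies in \(\OO\).

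\emph{Step 2: the annihilator of \(E[\idealA]\) is \(\idealA\).} We have \(\gcd(d,\cond(\OO))=1\) by \eqref{eq:conductor-coprime}, and \(p\nmid d\). Therefore \(E[d]\) is a free \(\OO/d\OO\)-module of rank one. This can be checked prime by prime: for \(\ell\mid d\), the order is maximal at \(\ell\), and \(E[\ell]\) is faithful of dimension two over \(\F_\ell\). Under this identification \(E[\idealA]\) corresponds to \(\overline{\idealA}/d\OO\), as recorded before the lemma. So \(x\in\OO\) kills \(E[\idealA]\) exactly when \(x\overline{\idealA}\subseteq d\OO\). Because \(\idealA\overline{\idealA}=d\OO\) and \(\overline{\idealA}\) is invertible, this is equivalent to
\[
 x\in d\,\overline{\idealA}^{-1}=\idealA .
\]

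\emph{Step 3: conclusion.} Combining the two steps gives \(\overline\theta\in\idealA\). Applying \(c\) gives \(\theta\in\overline{\idealA}\), which is \eqref{eq:ideal-compatibility}.

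The main point of care is Step 2. One must justify that \(E[d]\) is free of rank one over \(\OO/d\OO\) and check that the chosen coordinates really send \(E[\idealA]\) to \(\overline{\idealA}/d\OO\) rather than to \(\idealA/d\OO\). I would verify both locally at each \(\ell\mid d\), where \(\OO_\ell\) is maximal and \(\idealA_\ell\) is principal, say \(\idealA_\ell=\alpha\OO_\ell\). There \(E[\idealA]_\ell=\ker\iota(\alpha)\), and its annihilator is \(\alpha\OO_\ell\).
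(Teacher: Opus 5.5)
Your proof is correct and follows essentially the paper's route: both arguments push a kernel relation through the identification \(E[d]\simeq\OO/d\OO\), under which \(E[\idealA]\) corresponds to \(\overline{\idealA}/d\OO\). The only difference is which side of that relation you read off. The paper uses \(\varphi\circ\gamma=[d]\circ\psi\) to get \(\iota(\theta)E[d]\subseteq E[\idealA]\) and obtains \(\theta\OO\subseteq\overline{\idealA}\) directly, whereas you use \(\widehat\gamma=\widehat\psi\circ\varphi\) to see that \(\iota(\overline\theta)\) annihilates \(E[\idealA]\), identify that annihilator as \(\idealA\), and then conjugate.
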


\begin{proof}
For \(\ell\mid d\), Equation~\eqref{eq:conductor-coprime} shows that
\(\OO\otimes\Z_\ell\) is the maximal order of \(K\otimes\Q_\ell\).
The Tate module \(T_\ell(E)\) is free of rank one over this order; when the
local algebra is a field, its order is a discrete valuation ring, and when it
is split, its two idempotents give two summands of rank one over \(\Z_\ell\).
After choosing an identification, we have
\[
 E[d]\simeq\OO/d\OO
\]
as $\OO$-modules, and $E[\idealA]$ corresponds to
$\overline{\idealA}/d\OO$.

The identity~\eqref{eq:intertwining} reads
\(\varphi\circ\iota(\theta)=[d]\circ\psi\), whose right side kills
\(E[d]\); hence
\(\iota(\theta)E[d]\subseteq\ker\varphi=E[\idealA]\).  In the chosen
coordinates, this inclusion becomes
\[
 \theta(\OO/d\OO)\subseteq\overline{\idealA}/d\OO.
\]
Because \(d\OO=\idealA\overline{\idealA}\) is contained in
\(\overline{\idealA}\), the last inclusion is equivalent to
\(\theta\OO\subseteq\overline{\idealA}\), which is the required ideal
condition.
\end{proof}

The ideals allowed by this condition can be counted in the coordinates
\begin{equation}\label{eq:omega-coordinates}
 \delta\equiv D\pmod 2,\qquad
 \omega=\frac{\delta+\sqrt D}{2},\qquad
 \theta=a_0+h\omega,\qquad
 a_0=\frac{t-h\delta}{2}.
\end{equation}
The traces of \(\theta\) and \(h\omega\) have the same parity, so
\(a_0\in\Z\).

\begin{lemma}[Compatible ideals of prescribed norm]
\label{lem:ideal-count}
For a fixed triple $(t,\OO,h)$ satisfying~\eqref{eq:disc-relation}, the number
of invertible integral ideals $\idealA$ of norm $d$ satisfying
\eqref{eq:ideal-compatibility} is
\begin{equation}\label{eq:ideal-count}
 I_{D,h}(d,e,t)=
 \prod_{\ell\mid\gcd(d,h)}
 \left(1+\left(\frac{D}{\ell}\right)\right),
\end{equation}
where the symbol at $2$ is the Kronecker symbol.  In particular,
\begin{equation}\label{eq:ideal-count-upper}
 I_{D,h}(d,e,t)
 \leq2^{\omega(\gcd(d,e,t))}.
\end{equation}
\end{lemma}

\begin{proof}
Because the integer \(d\) is squarefree and prime to the conductor, the ideal
can be chosen separately at each \(\ell\mid d\).  Ideals of norm
\(\ell\) correspond to roots modulo \(\ell\) of the monic polynomial of
\(\omega\), and the condition \(\theta\in\overline{\idealA}\) imposes one
linear congruence on that root.

If \(\ell\nmid h\), the congruence selects the residue \(-a_0/h\).  Since
\(N(\theta)=de\equiv0\pmod\ell\), this residue is a root of the polynomial of
\(\omega\) and gives one local ideal.

If \(\ell\mid h\), the norm identity gives \(\ell\mid a_0\), so the linear
condition imposes no restriction and the number of roots is
\(1+(D/\ell)\): two if \(\ell\) splits, one if it ramifies, and zero if it is
inert.  The same statement holds at \(2\) with the Kronecker symbol, and
multiplication of the local counts proves~\eqref{eq:ideal-count}.

A local factor equals two only when \(\ell\mid d,h\) and \(\ell\) splits in
\(\OO\).  The relation \(h^2D=t^2-4de\), together with squarefreeness, then
implies \(\ell\mid e,t\).  For odd \(\ell\), this follows by reducing the
relation modulo \(\ell^2\); for \(\ell=2\), it follows from
\(D\equiv1\pmod8\) and the norm formula in
\eqref{eq:omega-coordinates}.  Hence every prime that contributes a factor
two divides \(\gcd(d,e,t)\), proving \eqref{eq:ideal-count-upper}.
\end{proof}

\subsection{The square equation in the class group}
\label{sec:class-square}

For \(G=\Pic(\OO)\), Onuki proves that the action on the reduction orbit of a
primitive \(\OO\)-oriented curve is free and transitive.  Every primitive
orientation needed here belongs either to that orbit or to its Frobenius
image~\cite[Proposition~3.3 and Theorem~3.4]{Onuki20}, so at most two
\(G\)-torsors occur.

Conjugation inverts an ideal class, whereas Frobenius commutes with the ideal
action; hence the operation \(J\) from~\eqref{eq:J} satisfies
\begin{equation}\label{eq:J-action}
 J(\idealB*x)=\idealB^{-1}*Jx.
\end{equation}
Consider one of the torsors on which the endpoint condition has a solution.
Choose \(x_0\) in that torsor with \(Jx_0=c_0*x_0\), and write every other
point as \(x=\idealB*x_0\).  The isogeny represented by \(\idealA\) ends at
\(Jx\) precisely when
\[
 \idealA*x=Jx,
\]
or equivalently
\begin{equation}\label{eq:square-equation}
 \idealB^2=[\idealA]^{-1}c_0.
\end{equation}
Thus the endpoint condition is a square equation in \(G\), and each nonempty
fiber of the square map has cardinality \(|G[2]|\).  The two possible torsors
therefore contribute at most \(2|G[2]|\) oriented curves for each allowed
ideal.

The passage from oriented maps to the kernels counted by
\(C^+_{d,e}(p)\) requires accounting for signs.  Each kernel determines its
isogeny up to sign, and among the
four choices \((\pm\varphi,\pm\psi)\), changing both signs leaves
\(\gamma\) unchanged, whereas changing one sign replaces \(\gamma\) by
\(-\gamma\).  A pair of kernels therefore gives two oriented data:
\[
 \gamma=\widehat\varphi\circ\psi
 \quad\text{and}\quad -\gamma.
\]
These data are distinct even when \(t=0\), because the automorphisms of a
generic curve are the central elements \(\pm1\), whose conjugation fixes every
endomorphism; a non-scalar endomorphism and its negative therefore define
distinct data.

Conversely, the oriented datum fixes \(\gamma\), and \(\idealA\) fixes the
kernel of \(\varphi\).  For data obtained from an actual pair, the identity
\begin{equation}\label{eq:recover-second-kernel}
 \psi=\frac{\varphi\circ\gamma}{d}
\end{equation}
is integral and fixes the kernel of \(\psi\).  Changing both signs changes
neither kernel, so the two data obtained from each pair cancel the factor two
from the two torsors and give
\begin{equation}\label{eq:geometric-injection}
 C^+_{d,e}(p)
 \leq
 \sum_{t^2<4de}\ \sum_{\OO\supseteq\Z[\theta]}
 I_{D,h}(d,e,t)\,|\Pic(\OO)[2]|.
\end{equation}

Equation~\eqref{eq:geometric-injection} injects the geometric pairs into the
arithmetic data on the right, so counting all such data gives a valid upper
bound even when some terms are not realized by isogenies.

\subsection{Sum over quadratic orders}

The sum over \(\OO\) is controlled by a bound for the \(2\)-torsion in its
class group.  The following constant applies to nonmaximal orders and also
covers the prime \(2\).

\begin{lemma}[Quadratic $2$-torsion]
\label{lem:two-torsion}
If $\OO$ is an imaginary quadratic order of discriminant $D<0$, then
\begin{equation}\label{eq:two-torsion}
 |\Pic(\OO)[2]|\leq13\,2^{\omega(|D|)}.
\end{equation}
\end{lemma}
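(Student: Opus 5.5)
The plan is to use Gauss's genus theory, extended to nonmaximal orders. Write \(D=f^2D_K\), where \(D_K\) is the fundamental discriminant and \(f=\cond(\OO)\). For a primitive positive definite binary quadratic form of discriminant \(D\), the classical result for form class groups applies to every negative discriminant, not only fundamental ones. It states that \(\Pic(\OO)[2]\) has order \(2^{\mu-1}\). Here \(\mu\) is the number of assigned characters, that is, the number of independent genus characters. This group is identified with the form class group of discriminant \(D\). The argument is that ambiguous classes correspond to genera through the principal genus theorem: the squares form the principal genus, and \(\Pic/\Pic^2\simeq\Pic[2]\) since the group is finite abelian. So the task reduces to bounding \(\mu\) by \(\omega(|D|)\) plus a constant.

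Step one recalls the standard recipe for \(\mu\) (Cox, \emph{Primes of the form \(x^2+ny^2\)}, Theorem 3.15 and Proposition 3.11; Buell's book for odd discriminants). Write \(D=-4n\) or \(D\equiv1\pmod 4\). For odd \(D\), \(\mu=r\), the number of odd primes dividing \(D\). For \(D=-4n\), let \(r\) be the number of odd primes dividing \(n\). Then \(\mu\) equals \(r\), \(r+1\) or \(r+2\), depending on \(n\bmod 8\): the possible extra characters are \(\chi_{-4}\), \(\chi_8\) and \(\chi_{-8}\), and at most two of them are independent. In every case \(\mu\le r+2\le\omega(|D|)+1\), because \(2\mid D\) contributes a prime to \(\omega(|D|)\) whenever extra characters occur.

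Step two concludes:
\[
|\Pic(\OO)[2]|=2^{\mu-1}\le 2^{\omega(|D|)}\le 13\cdot2^{\omega(|D|)}.
\]
The constant \(13\) is therefore very generous. If I wanted to avoid relying on the general form-theoretic statement, I would instead bound the group through the exact sequence
\[
1\to(\OO_K/f\OO_K)^\times/(\Z/f)^\times\OO_K^\times/\OO^\times\to\Pic(\OO)\to\Pic(\OO_K)\to1.
\]
The 2-torsion of the left group is at most its 2-rank bound. This gives a factor \(2^{\omega(f)+O(1)}\). It combines with \(|\Pic(\OO_K)[2]|=2^{\omega(D_K)-1}\) and the unit index (at most \(3\)). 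The result is \(|\Pic(\OO)[2]|\le 2^{\omega(D_K)-1}\cdot 2^{\omega(f)+c}\cdot 3\). One must be careful here, since primes dividing both \(f\) and \(D_K\) are counted once in \(\omega(|D|)\), and the local unit groups at \(2\) can have 2-rank up to \(2\) or \(3\). Absorbing these gives a constant of the form \(2^{c}\cdot 3/2\), plausibly the source of \(13\).

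The main obstacle is the careful accounting at the prime \(2\) in the nonmaximal case. The group \((\Z_2[\sqrt{D_K}]/f)^\times\) modulo \(\Z_2^\times\) can contribute extra 2-rank. The danger is double counting a prime \(\ell\mid\gcd(f,D_K)\) without a matching term in \(\omega(|D|)\). I would handle this by bounding the 2-rank of each local quotient \(((\OO_K\otimes\Z_\ell)/f)^\times/(\Z_\ell/f)^\times\): it is at most \(1\) for odd \(\ell\), and at most \(3\) for \(\ell=2\). I would then check that every prime counted appears in \(\omega(|D|)\).
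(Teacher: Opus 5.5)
Your argument is correct, but it takes a different route from the paper's.

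\textbf{Your route.} You cite the exact classical count $|C(D)[2]|=2^{\mu-1}$ for the form class group of an arbitrary negative discriminant (Cox, Proposition~3.11). You transfer it to $\Pic(\OO)$ through the standard identification. You then read $\mu\le\omega(|D|)+1$ off the table of assigned characters. Your chain $\mu\le r+2\le\omega(|D|)+1$ should be read casewise: for odd $D$ one has $\mu=r=\omega(|D|)$ directly, and for even $D$ one has $\omega(|D|)=r+1$. This yields $|\Pic(\OO)[2]|\le2^{\omega(|D|)}$, with constant $1$ instead of $13$; equality holds for instance at $D=-32$.

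\textbf{The paper's route.} The paper instead reproves a weaker bound from scratch. It represents each class of order at most two by a reduced ambiguous form and sorts these forms into the shapes $b=0$, $|b|=a$, $a=c$. It then counts coprime splittings of $|D|$ in each shape, obtaining $(2+4+4)2^{\omega(|D|)}$, which it rounds up to $13\cdot2^{\omega(|D|)}$.

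\textbf{Comparison.} Cox proves Proposition~3.11 by counting exactly these reduced ambiguous forms, so the two proofs rest on one mechanism. The paper's version is self-contained and gets by with crude allocation counts, which suffice for the collision bound. Yours is shorter and sharp, but it relies on the full classical statement for nonfundamental discriminants.

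\textbf{Your fallback.} The alternative through the exact sequence for $\Pic(\OO)\to\Pic(\OO_K)$ is unnecessary. In that sequence the image of the unit group is quotiented out, so it can only shrink the kernel; the factor $3$ you attach is therefore superfluous. The constant $13$ also does not come from local bookkeeping at $2$; it is simply a rounding of $2+4+4$.
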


\begin{proof}
An element of order at most two is represented by a primitive, reduced,
ambiguous binary quadratic form.  Such a form satisfies one of
\(b=0\), \(|b|=a\), or \(a=c\).  In these three cases the discriminant
factors as
\[
 |D|=4ac,\qquad |D|=a(4c-a),\qquad
 |D|=(2a-b)(2a+b).
\]
Primitivity makes the two factors coprime outside \(2\).  Each odd prime
power must therefore occur wholly in one factor.  When \(b=0\), the power of
\(2\) has at most two allocations, and each of the other two cases has at most
four.  Once the prime powers have been assigned, the reduction inequalities
leave at most one form for each remaining choice.  The total is at most
\((2+4+4)2^{\omega(|D|)}\), which is bounded by the right side of
\eqref{eq:two-torsion}, as in the usual proof through ambiguous
forms~\cite[Chapter~3]{Cox13}.
\end{proof}

The orders containing \(\Z[\theta]\) have discriminants
\(-\Delta/h^2\), and the identity
\begin{equation}\label{eq:divisor-identity}
 \sum_{h^2\mid n}2^{\omega(n/h^2)}=\tau(n)
\end{equation}
holds because both sides are multiplicative and equal \(a+1\) when
\(n=\ell^a\).  Applying it to~\eqref{eq:geometric-injection}, together with
Lemmas~\ref{lem:ideal-count} and~\ref{lem:two-torsion}, gives
\begin{equation}\label{eq:pre-arithmetic-bound}
 C^+_{d,e}(p)
 \leq13\sum_{t^2<4de}
 2^{\omega(\gcd(d,e,t))}\tau(4de-t^2).
\end{equation}

\subsection{An average divisor bound}

The remaining sum involves only elementary arithmetic.  For \(m\geq1\), put
\[
 S(m)=\sum_{t^2<4m}\tau(4m-t^2),
 \qquad
 \PP(X)=\prod_{\substack{\ell\leq X\\ \ell\ {\rm prime}}}
 \frac{\ell+1}{\ell-1}.
\]

\begin{lemma}\label{lem:divisor-average}
For every real $Z\geq2\sqrt m$,
\begin{equation}\label{eq:divisor-average}
 S(m)\leq6Z\PP(Z).
\end{equation}
\end{lemma}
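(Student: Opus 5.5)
The plan is to use the Dirichlet divisor-switching trick and reduce to a weighted sum of local root counts, which is then bounded by an Euler product. For each \(t\) with \(t^2<4m\), put \(n=4m-t^2\), so \(1\leq n\leq 4m\leq Z^2\). Every divisor of \(n\) pairs with its complement, and at least one member of each pair is at most \(\sqrt n\leq Z\). Hence \(\tau(n)\leq 2\,\#\{k\mid n:\ k\leq Z\}\). Interchanging the sums gives
\[
 S(m)\leq 2\sum_{k\leq Z} N_k,
 \qquad
 N_k=\#\{t:\ t^2<4m,\ k\mid 4m-t^2\}.
\]

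Next I bound \(N_k\) by residue classes. Let \(\rho(k)\) be the number of solutions of \(x^2\equiv 4m\pmod k\). The integers \(t\) with \(|t|<2\sqrt m\) lie in an interval of length at most \(2Z\), so each residue class modulo \(k\) contains at most \(2Z/k+1\leq 3Z/k\) of them, using \(k\leq Z\). Therefore \(N_k\leq 3Z\rho(k)/k\) and
\[
 S(m)\leq 6Z\sum_{k\leq Z}\frac{\rho(k)}{k}.
\]
This already produces the constant \(6\), so it remains to prove \(\sum_{k\leq Z}\rho(k)/k\leq\PP(Z)\).

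The function \(\rho\) is multiplicative by the Chinese remainder theorem, and every \(k\leq Z\) has all prime factors at most \(Z\). Hence
\[
 \sum_{k\leq Z}\frac{\rho(k)}{k}\leq\prod_{\ell\leq Z}\ \sum_{j\geq0}\frac{\rho(\ell^j)}{\ell^j},
\]
and it suffices to bound each local series by \((\ell+1)/(\ell-1)\). When \(\ell\nmid 4m\), Hensel's lemma gives \(\rho(\ell^j)\leq 2\) for \(j\geq1\). The local series is then at most \(1+2\sum_{j\geq1}\ell^{-j}=(\ell+1)/(\ell-1)\), with equality when \(4m\) is a nonzero square modulo \(\ell\).

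The main obstacle is the primes \(\ell\) dividing \(4m\), including \(\ell=2\), where \(\rho(\ell^j)\) can exceed \(2\). Let \(v=v_\ell(4m)\).

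For \(j\leq v\), the congruence reads \(x^2\equiv0\pmod{\ell^j}\). This gives \(\rho(\ell^j)=\ell^{\lfloor j/2\rfloor}\), so \(\rho(\ell^j)/\ell^j=\ell^{-\lceil j/2\rceil}\). The terms \(1,\ell^{-1},\ell^{-1},\ell^{-2},\ell^{-2},\dots\) sum to at most \(1+2/(\ell-1)=(\ell+1)/(\ell-1)\).

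For \(j>v\), solutions exist only if \(v\) is even. Writing \(x=\ell^{v/2}y\) reduces to a unit square root modulo \(\ell^{j-v}\). I will show that the combined contribution stays within the slack left by truncating the first part at \(j=v\), checking the geometric tails explicitly. I expect this to work for odd \(\ell\). At \(\ell=2\) the unit case has up to four roots, and there I would verify directly, treating \(v=2\) and \(v\geq3\) separately, that the local series is at most \(3=(2+1)/(2-1)\).

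If the exact inequality fails at some small prime, I would fall back on the slack in the estimate \(2Z/k+1\leq3Z/k\). The term \(2Z/k\) alone only needs a factor \(4\) in place of \(6\), and the leftover can absorb a bounded excess at \(\ell=2\) or \(\ell=3\).
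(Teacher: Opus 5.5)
Your outline is the paper's proof step for step. It uses the same bound $\tau(n)\le2\#\{k\mid n:k\le Z\}$, the same residue-class count producing the constant $6$, and the same reduction to local factors $\sum_{j\ge0}\rho(\ell^j)\ell^{-j}\le(\ell+1)/(\ell-1)$. What you actually prove is correct. However, the only estimate with real content is the local factor at primes $\ell\mid4m$, including $\ell=2$, which always divides $4m$. You announce this estimate rather than prove it, so the argument is incomplete exactly where the paper does its work.

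The missing computation is short. Let $v=v_\ell(4m)$. If $v$ is odd, there is no tail, and your head sum is already below $(\ell+1)/(\ell-1)$. If $v=2a$, every root modulo $\ell^j$ with $j>2a$ has valuation exactly $a$. Write $x=\ell^ay$ with $y$ modulo $\ell^{j-a}$; the congruence then only constrains $y$ modulo $\ell^{j-2a}$. Hence $\rho(\ell^j)=\ell^a c_{j-2a}$, where $c_r$ is the number of square roots of the unit $4m\ell^{-2a}$ modulo $\ell^r$. The lift factor $\ell^a$ is left implicit in your sketch, and it must be kept. Therefore
\[
 \sum_{j>2a}\frac{\rho(\ell^j)}{\ell^j}=\ell^{-a}\sum_{r\ge1}\frac{c_r}{\ell^r}.
\]
For odd $\ell$, the bound $c_r\le2$ makes this at most $2\ell^{-a}/(\ell-1)=2\sum_{i>a}\ell^{-i}$. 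That is exactly what the head $1+2\sum_{i=1}^{a}\ell^{-i}$ lacks relative to $(\ell+1)/(\ell-1)$. For $\ell=2$, the counts $c_1=1$, $c_2\le2$, and $c_r\le4$ for $r\ge3$ give $\sum_r c_r2^{-r}\le2$. This again yields a tail of at most $2^{1-a}=2\sum_{i>a}2^{-i}$, so the local factor is at most $3$.

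Two further points concern the parts you left open.

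First, the local bound is sharp. Equality holds whenever the unit part of $4m$ is a square modulo $\ell$ (modulo $8$ when $\ell=2$), so there is no slack to spare. Whatever case split you use at $2$, you must use $c_1=1$ and $c_2\le2$. The uniform bound $c_r\le4$ would give $2+2^{1-a}>2$ and fail.

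Second, the fallback you propose would not rescue a failure. After splitting $2Z/k+1$, the extra term $2\sum_{k\le Z}\rho(k)$ is controlled only through $\rho(k)\le Z\rho(k)/k$, which is precisely how the constant $6$ arose. An excess factor at one prime would therefore propagate into the final constant unless you proved an independent bound for $\sum_{k\le Z}\rho(k)$. Since the exact local inequality holds, no fallback is needed.
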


\begin{proof}
Let
\(\rho_A(a)=\#\{u\bmod a:u^2\equiv A\pmod a\}\).  Applying
\(\tau(n)\leq2\sum_{a\leq\sqrt n,\,a\mid n}1\) and summing first over the
possible divisors \(a\), the number of representatives of each residue class
in \(|t|<Z\) gives
\begin{equation}\label{eq:root-sum}
 S(m)\leq6Z\sum_{a\leq Z}\frac{\rho_{4m}(a)}a.
\end{equation}
For every prime $\ell$ and integer $A$,
\begin{equation}\label{eq:local-root-sum}
 \sum_{j\geq1}\frac{\rho_A(\ell^j)}{\ell^j}
 \leq\frac2{\ell-1}.
\end{equation}
For the local bound, write \(v=v_\ell(A)\).  At level \(j\leq v\), the zero
congruence gives \(\ell^{\lfloor j/2\rfloor}\) roots.  An odd \(v\) gives no
roots above level \(v\); if \(v=2a\), division by \(\ell^{2a}\) leaves a
congruence asking for the square root of a unit, each root of which has
\(\ell^a\) lifts.

For odd \(\ell\), a unit has at most two square roots at each level, and the
two geometric sums are bounded by \(2/(\ell-1)\).  For \(\ell=2\), a unit has
at most \(1,2,4\) roots modulo \(2,4,2^j\); the part arising from zero is
\(2(1-2^{-a})\), and the remaining tail is at most \(2^{1-a}\), with total at
most \(2\), proving~\eqref{eq:local-root-sum}.

The multiplicativity of \(\rho_A\) gives
\[
 \sum_{a\leq Z}\frac{\rho_{4m}(a)}a
 \leq\prod_{\ell\leq Z}\left(1+\frac2{\ell-1}\right)
 =\PP(Z).
\]
Substituting this estimate in~\eqref{eq:root-sum} proves the lemma.
\end{proof}

\begin{theorem}[Finite collision bound]\label{thm:collision}
Let $p>3$ and let $d,e\geq2$ be squarefree and coprime to $p$.  Set
\[
 X=2\ceil{\sqrt{de}}.
\]
Then
\begin{equation}\label{eq:finite-collision}
 \boxed{
 C^+_{d,e}(p)
 \leq78X\PP(X)
 \prod_{\ell\mid\gcd(d,e)}\left(1+\frac3\ell\right).}
\end{equation}
In particular,
\begin{equation}\label{eq:asymptotic-collision}
 C^+_{d,e}(p)
 \ll \sqrt{de}\,\log^2(2de)
 \prod_{\ell\mid\gcd(d,e)}\left(1+\frac3\ell\right),
\end{equation}
with an absolute implied constant, uniformly in $p,d,e$.
\end{theorem}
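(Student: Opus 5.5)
We start from the bound~\eqref{eq:pre-arithmetic-bound}, which already contains all the geometry:
\[
 C^+_{d,e}(p)\leq13\sum_{t^2<4de}2^{\omega(\gcd(d,e,t))}\tau(4de-t^2).
\]
What remains is to remove the weight \(2^{\omega(\gcd(d,e,t))}\) and apply Lemma~\ref{lem:divisor-average}. Put \(g=\gcd(d,e)\); this is squarefree. Write the weight as a divisor sum,
\[
 2^{\omega(\gcd(g,t))}=\sum_{k\mid\gcd(g,t)}1.
\]
Exchanging the order of summation gives
\[
 C^+_{d,e}(p)\leq13\sum_{k\mid g}\ \sum_{\substack{t^2<4de\\ k\mid t}}\tau(4de-t^2).
\]
For fixed \(k\), write \(t=ks\). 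Since \(k\mid g\) and \(k\) is squarefree, \(k^2\mid de\), so \(de=k^2m'\) with \(m'=de/k^2\). Then
\[
 4de-t^2=k^2(4m'-s^2),
\]
and submultiplicativity gives \(\tau(k^2n)\leq\tau(k^2)\tau(n)=3^{\omega(k)}\tau(n)\). The inner sum is therefore at most \(3^{\omega(k)}S(m')\).

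Next apply Lemma~\ref{lem:divisor-average} with \(Z=X/k\). This is admissible because \(X/k\geq2\sqrt{de}/k=2\sqrt{m'}\). It gives \(S(m')\leq6(X/k)\PP(X/k)\leq6(X/k)\PP(X)\), using that \(\PP\) is monotone. Collecting terms,
\[
 C^+_{d,e}(p)\leq78X\PP(X)\sum_{k\mid g}\frac{3^{\omega(k)}}{k}=78X\PP(X)\prod_{\ell\mid g}\Bigl(1+\frac3\ell\Bigr),
\]
which is~\eqref{eq:finite-collision}. The one point to double-check is that Lemma~\ref{lem:divisor-average} only needs \(Z\geq2\sqrt{m'}\) and not an integer \(Z\). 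If non-integer \(Z\) is a problem, we can instead use \(Z=2\lceil\sqrt{m'}\rceil\leq X/k+2\) and absorb the change into the constant. The cleanest route is to keep \(Z=X/k\), since the lemma is stated for real \(Z\).

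For~\eqref{eq:asymptotic-collision}, we have \(X\leq2\sqrt{de}+2\ll\sqrt{de}\). By Mertens' theorem,
\[
 \PP(X)=\prod_{\ell\leq X}\Bigl(1+\frac1\ell\Bigr)\Bigl(1-\frac1\ell\Bigr)^{-1}\ll\prod_{\ell\leq X}\Bigl(1-\frac1\ell\Bigr)^{-2}\ll\log^2X\ll\log^2(2de),
\]
where the step to \((1-1/\ell)^{-2}\) uses that \(\prod_\ell(1-\ell^{-2})^{-1}\) converges. None of these constants depend on \(p\), so the implied constant is absolute and the bound is uniform in \(p,d,e\).

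The main obstacle is the bookkeeping in the reduction from \(t\) to \(s\): one must verify that \(k^2\mid de\), and that the discarded factor \(\tau(k^2)\) is exactly \(3^{\omega(k)}\), which is what produces the local factor \(1+3/\ell\).
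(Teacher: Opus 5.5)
Your proposal is correct and follows the paper's proof essentially step for step. Both arguments use the same ingredients: the divisor-sum expansion of \(2^{\omega(\gcd(g,t))}\); the substitution \(t=rs\) with \(\tau(r^2n)\leq3^{\omega(r)}\tau(n)\); Lemma~\ref{lem:divisor-average} with \(Z=X/r\) and the monotonicity of \(\PP\); and Mertens' estimate \(\PP(X)=O(\log^2X)\) together with \(X\leq3\sqrt{de}\) for the asymptotic form. Your worry about non-integer \(Z\) does not arise, since the lemma is stated and proved for real \(Z\).
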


\begin{proof}
Set \(m=de\) and \(g=\gcd(d,e)\); since \(g\) is squarefree,
\[
 2^{\omega(\gcd(g,t))}
 =\sum_{\substack{r\mid g\\r\mid t}}1.
\]
For each such \(r\), the relations \(r^2\mid m\), \(t=ru\), and
\(\tau(r^2n)\leq3^{\omega(r)}\tau(n)\) show that the right side of
\eqref{eq:pre-arithmetic-bound} is then at most
\[
 13\sum_{r\mid g}3^{\omega(r)}S(m/r^2).
\]
Lemma~\ref{lem:divisor-average}, with \(Z=X/r\), and the inequality
\(\PP(X/r)\leq\PP(X)\) give
\[
 78X\PP(X)\sum_{r\mid g}\frac{3^{\omega(r)}}r
 =78X\PP(X)\prod_{\ell\mid g}\left(1+\frac3\ell\right),
\]
which is~\eqref{eq:finite-collision}.

Since
\((\ell+1)/(\ell-1)=(1-\ell^{-2})(1-\ell^{-1})^{-2}\), the Mertens product
estimate of Rosser and Schoenfeld~\cite{RS62} gives
\(\PP(X)=O(\log^2X)\), with the uniform bound
\begin{equation}\label{eq:explicit-P}
 \PP(X)\leq2^{40}(1+\log X)^2\qquad(X\geq2).
\end{equation}
Since \(X\leq3\sqrt{de}\), this estimate also proves
\eqref{eq:asymptotic-collision}.
\end{proof}

\begin{corollary}[Uniform form for a bounded family]
\label{cor:COL50}
Let $n=\floor{\log_2p}+1$.  If $d,e\leq B$, $4B<p$, and the hypotheses of
Theorem~\ref{thm:collision} hold, then
\begin{equation}\label{eq:COL50}
 C^+_{d,e}(p)
 \leq2^{50}n^2\sqrt{de}
 \prod_{\ell\mid\gcd(d,e)}\left(1+\frac3\ell\right).
\end{equation}
\end{corollary}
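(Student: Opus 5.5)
The corollary follows by feeding explicit numerical inputs into the finite bound~\eqref{eq:finite-collision} of Theorem~\ref{thm:collision}. The plan is to bound \(X\), then \(\PP(X)\), and then collect constants. The product over \(\ell\mid\gcd(d,e)\) passes through unchanged, so only the factor \(78X\PP(X)\) needs work.

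\emph{Bounding \(X\).} Since \(de\geq4\), we have \(\sqrt{de}\geq2\). Hence \(\ceil{\sqrt{de}}\leq\sqrt{de}+1\leq\tfrac32\sqrt{de}\), which gives \(X\leq3\sqrt{de}\); this is the same inequality already used in the theorem's proof.

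\emph{Bounding \(\PP(X)\).} Apply~\eqref{eq:explicit-P}, which is valid because \(X\geq4\); it gives \(\PP(X)\leq2^{40}(1+\log X)^2\). Next relate \(\log X\) to \(n\). From \(d,e\leq B<p/4\) we get \(\sqrt{de}\leq B<p/4\), so \(X\leq3\sqrt{de}<p\). Also \(p<2^{n}\) by the definition \(n=\floor{\log_2p}+1\). Therefore \(1+\log X<1+n\log2\). Since \(n\geq3\) for \(p>3\), this yields \(1+\log X\leq n\) once we check \(1+0.694n\leq n\), which holds for \(n\geq4\). For \(n=3\) (\(p\in\{5,7\}\)) the hypothesis \(4B<p\) forces \(B\leq1\), so no \(d\geq2\) exists and the statement is vacuous. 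Using the cruder bound \(1+\log X\leq2n\) would avoid this case check at the cost of a factor \(4\), so I would use it if the constants allow.

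\emph{Collecting constants.} Combining the two bounds gives
\[
C^+_{d,e}(p)\leq78\cdot3\cdot2^{40}\cdot4n^2\sqrt{de}\prod_{\ell\mid\gcd(d,e)}\Bigl(1+\tfrac3\ell\Bigr).
\]
Here \(78\cdot3\cdot4=936<2^{10}\), so the total constant is below \(2^{50}\) and~\eqref{eq:COL50} follows.

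The only delicate point is this constant bookkeeping. The budget is tight, since \(234\cdot4=936\) must stay under \(1024\), so I would settle on whichever of the two bounds \(1+\log X\leq n\) or \(1+\log X\leq2n\) fits. With the sharper bound \(1+\log X\leq n\), the constant is \(234\cdot2^{40}<2^{48}\), which leaves ample room.
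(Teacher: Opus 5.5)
Your proof is correct and follows the paper's argument: combine \eqref{eq:finite-collision} with \eqref{eq:explicit-P}, use \(X\leq3\sqrt{de}\) and \(1+\log X\leq n\), and absorb the constant \(234\cdot2^{40}<2^{50}\). The only difference is that the paper bounds \(X\leq2B<p/2\) rather than \(X<p\), which gives \(1+\log X\leq n\) for every \(n\), so your separate \(n=3\) check and the fallback to \(2n\) are not needed.
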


\begin{proof}
Here \(X\leq2B<p/2\), so \(1+\log X\leq n\).  Combining
\eqref{eq:finite-collision} with~\eqref{eq:explicit-P}, and then using
\(X\leq3\sqrt{de}\) and \(234\cdot2^{40}<2^{50}\), proves the claim.
\end{proof}

\section{Many curves with smooth isogenies to their conjugates}
\label{sec:density}

The degree family used by the algorithm is fixed deterministically before the
sampling in \cref{sec:algorithm}.  The Chenu--Smith count gives the total
number of associated isogenies, and the collision theorem shows that these
isogenies occur on many curves.  Denote the bit length of \(p\) by
\[
 n=\floor{\log_2p}+1.
\]

\subsection{Isogeny count for one degree}

\begin{proposition}[Count for one degree]\label{prop:CS-mass}
For every squarefree \(d\geq2\) prime to \(p\),
\begin{equation}\label{eq:first-moment-raw}
 M_d:=\sum_Er_d(E)
 \geq\frac12h_{K_d}-5\Psi(d),
\end{equation}
where \(K_d=\Q(\sqrt{-pd})\), \(D_{K_d}\) is its fundamental discriminant,
\(h_{K_d}=h(D_{K_d})\) is its class number, and
$\Psi(d)=d\prod_{\ell\mid d}(1+1/\ell)$.
\end{proposition}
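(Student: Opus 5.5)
The plan is to count \((d,+1)\)-structures through the orientation by \(\Z[\sqrt{-pd}]\), following Chenu--Smith~\cite{CS22} in the form used by Udovenko. For a \((d,+1)\)-structure \((E,\varphi)\), put \(\alpha=\pi_{E^{(p)}}\circ\varphi\). As in the proof of Proposition~\ref{prop:automatic}, the identities \eqref{eq:frobenius-square} and \(\widehat\varphi=\varphi^{(p)}\) give \(\widehat\alpha=-\alpha\). Hence \(\alpha^2=-pd\), and \(\alpha\) orients \(E\) by \(\Z[\sqrt{-pd}]\). Conversely, any embedding \(\Z[\sqrt{-pd}]\hookrightarrow\End(E)\) whose generator \(\alpha\) has inseparable degree \(p\) factors as \(\alpha=\pi_{E^{(p)}}\circ\varphi\) with \(\varphi:E\to E^{(p)}\) separable of degree \(d\). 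The same trace computation, run backwards, gives \(\widehat\varphi=\varphi^{(p)}\). So \(M_d\) counts, up to the sign identification \(\pm\varphi\) and the normalization by automorphisms, the oriented curves \((E,\iota)\) with \(\iota(\sqrt{-pd})\) of norm \(pd\), subject to two conditions: \(\varphi\) must be cyclic, and \(E\) must be generic.

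The first step is a lower bound for the number of such oriented curves. Every order \(\OO\) with \(\Z[\sqrt{-pd}]\subseteq\OO\subseteq\OO_{K_d}\) has conductor prime to \(p\), because \(p\) ramifies exactly once in \(K_d\) and \(pd\) is squarefree up to the factor \(4\). In particular, we may take \(\OO=\OO_{K_d}\). Since \(p\) ramifies in \(K_d\), Onuki's theorem shows that the primitive \(\OO_{K_d}\)-oriented supersingular curves form at least one free \(\Pic(\OO_{K_d})\)-orbit, which gives at least \(h_{K_d}\) oriented curves. Each of them yields an element \(\alpha\) of norm \(pd\) with \(\alpha^2=-pd\). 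One checks that \(\alpha\) is inseparable: the prime above \(p\) is ramified, so \(\alpha\) lies in the maximal two-sided ideal of the completion at \(p\). Its separable part is then \(\varphi\), of degree \(d\). Passing from oriented data to kernels costs a factor at most \(2\): the orientations \(\iota\) and \(\iota\circ c\) give \(\pm\alpha\), hence the same kernel. This accounts for the constant \(\tfrac12h_{K_d}\).

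The second step subtracts the bad cases, namely non-cyclic \(\varphi\) and the curves with \(j\in\{0,1728\}\). If \(\varphi\) is not cyclic, then \(\ker\varphi\supseteq E[\ell]\) for some \(\ell\), so \(\ell^2\mid d\). This contradicts squarefreeness, so cyclicity is automatic. The real loss comes from the normalization step. Several orientations may give the same kernel on the same \(E\), and the orientations on \(j=0,1728\) must be discarded, since the sum runs only over generic maximal models. I would bound these losses by counting the isogenies of degree \(d\) out of a fixed curve, which number at most \(\Psi(d)\) cyclic kernels. At each exceptional invariant and for each extra automorphism class, at most \(\Psi(d)\) structures are lost or double counted. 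With the two exceptional invariants and the automorphism groups of orders \(4\) and \(6\), the total loss is at most \(5\Psi(d)\).

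The main obstacle is the exact bookkeeping between primitive orientations and kernels. This means controlling when distinct oriented curves in the orbit give the same pair \((E,\ker\varphi)\), including the Frobenius image torsor, and verifying the constant \(5\) in the error term. If that direct accounting becomes messy, I would instead quote the Chenu--Smith identity for the number of \((d,\varepsilon)\)-structures in terms of \(h(-4pd)\) or \(h(-pd)\), with the explicit correction for \(j=0,1728\). I would then use \(h(\Z[\sqrt{-pd}])\geq h_{K_d}\) up to the index factor relating the order to \(\OO_{K_d}\).
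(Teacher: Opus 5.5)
Your proposal is correct, and it reaches the same inequality by a different route for the main term. The paper does not construct oriented curves. It quotes the Chenu--Smith count of isomorphism classes of \((d,+1)\)-structures in Udovenko's form, \(\alpha_d\in\{h_{K_d},2h_{K_d},4h_{K_d}\}\) according to \(-dp\bmod 8\), and uses only \(\alpha_d\geq h_{K_d}\). It then does essentially your bookkeeping: on a generic model each kernel carries exactly the two structures \(\pm\varphi\), and each special invariant carries at most \(\#\operatorname{Aut}\cdot\Psi(d)\) structures. This gives \(M_d\geq(\alpha_d-10\Psi(d))/2\).

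You instead rederive the lower bound \(h_{K_d}\) from Onuki's free \(\Pic(\OO_{K_d})\)-action. You use the correspondence \(\iota(\sqrt{-pd})=\pi_{E^{(p)}}\circ\varphi\) together with the trace argument of Proposition~\ref{prop:automatic}, and squarefreeness gives cyclicity. This is more self-contained: it uses only tools the paper already invokes in Section~\ref{sec:collisions}, and it avoids depending on Chenu--Smith's normalization of isomorphism classes and twists. The cost is that you discard the orientations whose optimal order is \(\Z[\sqrt{-pd}]\) when \(-pd\equiv1\pmod4\). The cited formula includes those, but the inequality does not need them.

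Your ``main obstacle'' is not one. For a lower bound a single free orbit suffices, so the Frobenius-image torsor never has to be controlled. The constant also deserves a precise statement in place of ``at most \(\Psi(d)\) per extra automorphism class.'' At \(j\in\{0,1728\}\), \(\alpha\) determines \(\varphi\), and the kernel determines \(\varphi\) up to \(\operatorname{Aut}(E^{(p)})\). So there are at most \(6\Psi(d)+4\Psi(d)\) oriented classes there. Combined with fibers of size at most \(2\) over generic models, this gives \(h_{K_d}\leq 2M_d+10\Psi(d)\), which is exactly the claim.
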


Here \(\Psi\) denotes the Dedekind psi function, as
in~\cite{Udovenko26}; Weso{\l}owski uses the distinct notation \(\Psi(X,B)\)
for the number of smooth integers in an interval~\cite{Wes26}.

\begin{proof}
In the notation of~\cite[Theorem~3]{Udovenko26}, Chenu and Smith count the
isomorphism classes of \((d,+1)\)-structures by
\[
 \alpha_d=
 \begin{cases}
  2h_{K_d},&-dp\equiv1\pmod 8,\\
  4h_{K_d},&-dp\equiv5\pmod 8,\\
  h_{K_d},&\text{otherwise}.
 \end{cases}
\]
The count agrees with~\cite[Corollary~4.15]{CS22} for sign \(+1\), and in every case
\(\alpha_d\geq h_{K_d}\).  Since the Frobenius trace of a
\((d,\varepsilon)\)-structure is \(-2\varepsilon p\), our choice of maximal
model selects \(\varepsilon=+1\) and fixes the twist.

To pass from signed maps to the kernels counted by \(r_d(E)\), observe that
there are
\[
 \Psi(d)=\prod_{\ell\mid d}(\ell+1)
\]
cyclic subgroups of order \(d\) in \(E[d]\).  After a kernel is fixed, the
quotient map and an identification of its target with \(E^{(p)}\) are
determined up to an automorphism of \(E^{(p)}\).

The automorphism groups at \(j=0\) and \(j=1728\) have orders \(6\) and \(4\),
so the structures above these two invariants contribute at most
\(10\Psi(d)\) terms.  On every generic maximal model, the automorphism group
is \(\{\pm1\}\), and the maps \(\varphi\) and \(-\varphi\) give one kernel.
Removing the two special invariants and dividing by two yields
\[
 M_d\geq\frac{\alpha_d-10\Psi(d)}2
 \geq\frac12h_{K_d}-5\Psi(d),
\]
which proves~\eqref{eq:first-moment-raw}.
\end{proof}

For the degrees used below, \(\Psi(d)=d^{1+o(1)}\), whereas the class number
term has order \(\sqrt{pd}/\log p\); every estimate retains the lower-order
subtraction.

\begin{proposition}[Class number bound for the family]\label{prop:first-moment}
Let $p\geq2^{17}$ be prime, and let $\DD$ be any family of squarefree degrees
$d\geq2$ satisfying $p\nmid d$ and $4d<p$.  Apart from at most one degree
$d_*\in\DD$, every $d\in\DD$ satisfies
\begin{equation}\label{eq:first-moment}
 M_d\geq\frac{\sqrt{pd}}{80n}-5\Psi(d).
\end{equation}
\end{proposition}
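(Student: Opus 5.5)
Our plan is to combine Proposition~\ref{prop:CS-mass} with a lower bound for \(h_{K_d}\) obtained from the analytic class number formula and Tatuzawa's effective form of Siegel's theorem.  Proposition~\ref{prop:CS-mass} already gives \(M_d\geq\frac12h_{K_d}-5\Psi(d)\).  It therefore suffices to show that, apart from at most one \(d\in\DD\),
\[
 h_{K_d}\geq\frac{\sqrt{pd}}{40n}.
\]

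The first step relates \(|D_{K_d}|\) to \(pd\).  Since \(d\) is squarefree and \(p\nmid d\), the integer \(pd\) is squarefree.  Hence \(D_{K_d}\) equals \(-pd\) or \(-4pd\), and in both cases \(pd\leq|D_{K_d}|\leq4pd\).  Next we apply the class number formula \(h(D)=\frac{w}{2\pi}\sqrt{|D|}\,L(1,\chi_D)\).  Here \(w=2\), because \(pd\geq p>3\) excludes \(\Q(\sqrt{-1})\) and \(\Q(\sqrt{-3})\).  This gives \(h_{K_d}\geq\pi^{-1}\sqrt{pd}\,L(1,\chi_{D_{K_d}})\).

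The second step invokes Tatuzawa's theorem.  For \(0<\varepsilon<1/2\) and every real primitive character of conductor \(q\geq e^{1/\varepsilon}\), with at most one exception, it gives \(L(1,\chi)\geq0.655\,\varepsilon\,q^{-\varepsilon}\).  We take \(\varepsilon=1/\log(4p^2)\) or something comparable.  The moduli here satisfy \(q=|D_{K_d}|\leq4pd<p^2\), so \(q^{-\varepsilon}\geq e^{-1}\).  The resulting bound is \(L(1,\chi)\gg1/\log p\), and with explicit constants it is at least roughly \(0.655/(e\cdot 2\log(2p))\).

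Two points need checking.  First, the hypothesis \(q\geq e^{1/\varepsilon}\) must hold.  It is not automatic, since the conductor may be as small as \(p\) while \(e^{1/\varepsilon}=4p^2\).  We will instead take \(\varepsilon=1/\log p\) and \(q\geq p=e^{1/\varepsilon}\), so that \(q^{-\varepsilon}\geq(4pd)^{-1/\log p}\geq (p^2)^{-1/\log p}=e^{-2}\).  Tatuzawa's version with exceptions relative to a fixed \(\varepsilon\) is also valid under \(\varepsilon\leq 1/11.2\), and this is where \(p\geq2^{17}\) enters, since \(\log 2^{17}\approx11.8\).  Second, the characters \(\chi_{D_{K_d}}\) for distinct squarefree \(d\) are distinct, because the fields \(\Q(\sqrt{-pd})\) differ.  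The single possible exceptional character therefore excludes at most one degree \(d_*\).

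Combining the estimates gives
\[
 h_{K_d}\geq \frac{0.655}{\pi e^{2}}\cdot\frac{\sqrt{pd}}{\log p}.
\]
Since \(\log p\leq n\log2\), this is at least \(\sqrt{pd}/(40n)\); we will verify that \(0.655/(\pi e^2\log 2)\approx0.041>1/40\).  Halving and subtracting \(5\Psi(d)\) then yields~\eqref{eq:first-moment}.

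The main obstacle is the bookkeeping of Tatuzawa's hypotheses and constants.  The condition on \(\varepsilon\) and on \(q\) relative to \(\varepsilon\) must both hold, and the numerical factor must still fit under \(1/80\).  If the constants are too tight with \(\varepsilon=1/\log p\), we would instead take \(\varepsilon=1/\log(4p^2)\) and use \(q\geq p\geq2^{17}\) together with the version of Tatuzawa allowing all \(q\geq\max(e^{1/\varepsilon},e^{11.2})\), adjusting the constant accordingly.
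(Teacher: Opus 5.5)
Your proposal is correct and follows the paper's proof: you apply Tatuzawa with \(\varepsilon=1/\log p\), which is admissible because \(\log p\geq17\log2>11.2\) and \(|D_{K_d}|\geq pd\geq p=e^{1/\varepsilon}\). You then use \(|D_{K_d}|^{-\varepsilon}>e^{-2}\) from \(4pd<p^2\), the class number formula with \(w=2\), the numerical check \(2\pi e^2\log2<80\cdot0.655\), and the distinctness of the fields \(\Q(\sqrt{-pd})\), which confines the exceptional character to one degree. The fallback choice \(\varepsilon=1/\log(4p^2)\) is unnecessary.
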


\begin{proof}
Apply Tatuzawa's theorem~\cite{Tatuzawa51} with
\(\varepsilon=1/\log p\) for every degree in \(\DD\).  The fundamental
discriminant satisfies
\[
 pd\leq|D_{K_d}|\leq4pd<p^2.
\]
For \(p\geq2^{17}\), this discriminant also satisfies
\(|D_{K_d}|\geq\max(e^{1/\varepsilon},e^{11.2})\).  Tatuzawa's theorem
therefore gives, with at most one primitive real character excluded,
\[
 L(1,\chi_{D_{K_d}})>
 0.655\,\varepsilon |D_{K_d}|^{-\varepsilon}.
\]
The class number formula for imaginary quadratic fields and the inequality
\(|D_{K_d}|^{-1/\log p}>e^{-2}\) imply
\[
 \frac12h_{K_d}
 >\frac{0.655}{2\pi e^2\log p}\sqrt{pd}
 >\frac{\sqrt{pd}}{80n}.
\]
For the last inequality, use
\(\log p<n\log2\) and \(2\pi e^2\log2<80\cdot0.655\).

Distinct squarefree degrees prime to \(p\) give distinct quadratic fields, so
the exceptional character can affect at most one degree in \(\DD\).
Substitution of the class number bound in \eqref{eq:first-moment-raw} proves
the proposition.
\end{proof}

\subsection{Distinct curves}

The preceding sum counts pairs \((E,d)\), whereas the density estimate
requires the number of distinct curves that occur among them.  The following
lemma combines the total count with the collision bounds: its first inequality
counts points in the support, and its second counts points with two distinct
labels.

\begin{lemma}\label{lem:moments}
Let $\mathcal X$ be finite, $|\mathcal X|\leq N$, and let
$r_i:\mathcal X\to\Z_{\geq0}$.  Put
\[
 R(x)=\sum_i r_i(x),\qquad S=\sum_xR(x),
\]
and
\[
 C_{ij}=\sum_xr_i(x)r_j(x)-\mathbf1_{i=j}\sum_xr_i(x).
\]
Suppose $S\geq S_0\geq0$ and $C_{ij}\leq U_{ij}$, where $U$ is symmetric and
nonnegative.  If $K=\sum_{i,j}U_{ij}$, then
\begin{equation}\label{eq:coverage-moment}
 \#\{x:R(x)>0\}\geq\frac{S_0^2}{S_0+K}.
\end{equation}
Moreover, let
\[
 V(x)=\sum_i\mathbf1_{r_i(x)>0},\quad
 V_0=\max\left\{0,S_0-\frac12\sum_iU_{ii}\right\},\quad
 K_{\ne}=\sum_{i\ne j}U_{ij}.
\]
If $V_0>N$, then
\begin{equation}\label{eq:two-label-moment}
 \#\{x:V(x)\geq2\}
 \geq\frac{(V_0-N)^2}{V_0+K_{\ne}}.
\end{equation}
\end{lemma}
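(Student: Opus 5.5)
The plan is to prove both inequalities by Cauchy--Schwarz applied to second moments, which are controlled by the collision bounds \(C_{ij}\leq U_{ij}\). At each step we reduce to lower bounds by monotone functions of \(S\) or of a first moment.

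For \eqref{eq:coverage-moment}, expand
\[
 \sum_xR(x)^2=\sum_{i,j}\sum_xr_i(x)r_j(x)=\sum_{i,j}C_{ij}+\sum_i\sum_xr_i(x)\leq K+S,
\]
since the diagonal correction \(\mathbf 1_{i=j}\sum_xr_i(x)\) summed over \(i\) is exactly \(S\). Cauchy--Schwarz on the support of \(R\) gives
\[
 S^2\leq\#\{x:R(x)>0\}\sum_xR(x)^2,
\]
hence \(\#\{R>0\}\geq S^2/(S+K)\). The function \(s\mapsto s^2/(s+K)\) is nondecreasing for \(s\geq0\) because \(K\geq0\), so we may replace \(S\) by \(S_0\). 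The degenerate case \(S=0\) is trivial.

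For \eqref{eq:two-label-moment}, we first bound \(\sum_xV(x)\) from below. For an integer \(r\geq0\) one has
\[
 r\leq\mathbf 1_{r>0}+\tfrac12r(r-1),
\]
which can be checked directly for \(r=0,1,2\) and holds trivially for \(r\geq3\). Summing over \(i\) and \(x\) gives
\[
 \sum_xV(x)\geq S-\tfrac12\sum_iC_{ii}\geq S_0-\tfrac12\sum_iU_{ii}.
\]
Since also \(\sum_xV(x)\geq0\), we get \(\sum_xV(x)\geq V_0\). Put \(W(x)=\max(V(x)-1,0)\). Then
\[
 T:=\sum_xW(x)\geq\sum_xV(x)-N\geq V_0-N>0,
\]
using \(|\mathcal X|\leq N\).

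The step needing care is the second-moment bound for \(W\). We use
\[
 W(x)^2\leq V(x)(V(x)-1)=\sum_{i\neq j}\mathbf 1_{r_i(x)>0}\mathbf 1_{r_j(x)>0}\leq\sum_{i\neq j}r_i(x)r_j(x).
\]
Summing over \(x\), and noting that off the diagonal \(C_{ij}\) has no correction term, gives \(\sum_xW^2\leq K_{\ne}\). Because \(T>0\), some \(x\) has \(V(x)\geq2\), so \(K_{\ne}>0\). Cauchy--Schwarz on \(\{W>0\}=\{V\geq2\}\) then yields
\[
 \#\{x:V(x)\geq2\}\geq\frac{T^2}{K_{\ne}}\geq\frac{(V_0-N)^2}{K_{\ne}}\geq\frac{(V_0-N)^2}{V_0+K_{\ne}}.
\]
The main thing to get right is the pair of pointwise integer inequalities. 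The first converts the collision counts, which carry the diagonal subtraction, into control of distinct labels. The second bounds \(W^2\) by off-diagonal products only.
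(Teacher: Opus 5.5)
Your proof is correct and follows the paper's argument: the same second-moment expansion \(\sum_xR(x)^2\leq S+K\) with Cauchy--Schwarz and monotonicity of \(s^2/(s+K)\) for the first bound, and, for the second, the same pointwise inequality \(\mathbf 1_{r>0}\geq r-r(r-1)/2\) together with off-diagonal control of \(V(V-1)\). The only difference is that you apply Cauchy--Schwarz to \(W=\max(V-1,0)\) rather than to \(V\) on \(\{V\geq2\}\). This removes the \(\sum_xV(x)\) term from the second moment and the second monotonicity step, and it yields the slightly stronger bound \((V_0-N)^2/K_{\ne}\).
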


\begin{proof}
By the definition of \(C_{ij}\),
\[
 \sum_xR(x)^2=S+\sum_{i,j}C_{ij}\leq S+K.
\]
Cauchy--Schwarz on the support of \(R\), together with the monotonicity of
\(s^2/(s+K)\), gives the first bound.

For every integer $r\geq0$,
$\mathbf1_{r>0}\geq r-r(r-1)/2$, and hence
$\sum_xV(x)\geq V_0$.  Moreover,
\[
 \sum_xV(x)^2
 \leq\sum_xV(x)+K_{\ne}.
\]
The points with \(V(x)\leq1\) contribute at most \(N\) to
\(\sum_xV(x)\); applying Cauchy--Schwarz to the remaining points and using
the same monotonicity argument gives~\eqref{eq:two-label-moment}.
\end{proof}

The argument is deterministic and is based on the displayed count and
collision bounds.

\subsection{The degree family}

The degrees searched by the algorithm contain the same number of prime
factors, all drawn from a short interval.  The interval is chosen so that the
degrees are large enough to reach many curves and their factors remain small
enough for enumeration.

For $n\geq2^{16}$ let
\begin{equation}\label{eq:parameters}
 h=\ceil{\log_2n},\qquad
 s=\min\{a\in2\Z:a^2\geq nh\},\qquad
 k=\ceil{\frac{n}{3s}}+4.
\end{equation}
Let $q_1<\cdots<q_m$ be all primes in
\begin{equation}\label{eq:prime-interval}
 2^s<q\leq2^{s+4},
\end{equation}
and define
\begin{equation}\label{eq:degree-family}
 \DD_p=
 \left\{\prod_{i\in I}q_i:
 I\subseteq\{1,\ldots,m\},\ |I|=k\right\}.
\end{equation}
Every member of $\DD_p$ is squarefree, and all its prime factors are at most
$2^{s+4}$.

The elementary prime estimates of Rosser and Schoenfeld~\cite{RS62} imply
\begin{equation}\label{eq:prime-count}
 m\geq\frac{4\cdot2^s}{s}
\end{equation}
in our range.  This weak bound can also be recovered directly from central
binomial coefficients and Chebyshev's function.

For each $q_i$, set
\[
 w_i=\floor{\sqrt{q_i}},\qquad
 v_i=\ceil{\sqrt{q_i}},\qquad
 b_i=v_i^2+\ceil{\frac{3v_i^2}{q_i}}.
\]
Let $e_k$ denote the elementary symmetric polynomial of degree $k$, and put
\begin{align}
 A&=e_k(w_1,\ldots,w_m),
 &W&=\prod_{i=m-k+1}^mw_i,\label{eq:AW}\\
 Z&=e_k(q_1+1,\ldots,q_m+1),
 &J&=e_k(q_1+3,\ldots,q_m+3),\label{eq:ZJ}\\
 H&=[X^kY^k]\prod_{i=1}^m
 (1+v_iX+v_iY+b_iXY),
 &H_{\rm diag}&=e_k(b_1,\ldots,b_m).
 \label{eq:H}
\end{align}

These five quantities collect the terms in the lower bound and in the
collision bounds:
\begin{center}
\begin{tabular}{cl}
\toprule
Quantity & Use in the calculation\\
\midrule
\(A\) & lower bound for \(\sum_{d\in\DD_p}\sqrt d\)\\
\(W\) & largest term that may be lost to Tatuzawa's exception\\
\(Z\) & sum of the penalties involving the Dedekind psi function\\
\(J\) & upper bound for collisions that use the same degree\\
\(H\) & upper bound for all ordered pairs of colliding degrees\\
\bottomrule
\end{tabular}
\end{center}

Expanding the four terms in each factor of the definition of \(H\) gives
\begin{equation}\label{eq:H-majorizes}
 H\geq
 \sum_{d,e\in\DD_p}\sqrt{de}
 \prod_{\ell\mid\gcd(d,e)}\left(1+\frac3\ell\right),
\end{equation}
and \(H-H_{\rm diag}\) is an upper bound for the same sum over \(d\ne e\).
An index used only by \(d\) or only by \(e\) contributes
\(v_i\geq\sqrt{q_i}\), whereas an index common to both contributes
\[
 b_i\geq q_i+3=q_i\left(1+\frac3{q_i}\right).
\]
For equal degrees, the required upper bound is \(J\), since
\(d\prod_{\ell\mid d}(1+3/\ell)=\prod_{\ell\mid d}(\ell+3)\).

The parameters satisfy the following bounds, whose derivation also gives a
uniform choice of the constants hidden in the \(o(1)\) term:
\begin{align}
 s&=O(\sqrt{n\log n}),&
 k&=O(\sqrt{n/\log n}),\label{eq:sk-size}\\
 B:=\max_{d\in\DD_p}d
 &\leq2^{n/3+6s}<2^{n/2},&4B&<p,\label{eq:B-size}\\
 A&\geq2^{n/2+5s},&W&\leq A/4,\label{eq:A-size}\\
 H&\leq8A^2,&Z,J&\leq2\sqrt B\,A.
\label{eq:rounding-bounds}
\end{align}
The inequalities $\sqrt{nh}\leq s<\sqrt{nh}+2$ and
$n\geq2048h$ hold in the stated range.  Hence $s\geq32h$, $40s\leq n$, and
$k\leq n/(3s)+5$.  These inequalities give
$k(s+4)\leq n/3+6s$ and~\eqref{eq:B-size}.  From
\eqref{eq:prime-count},
$m/k\geq2^{s-h}$, while $w_i\geq2^{s/2}$; therefore
\[
 A\geq\binom mk2^{ks/2}
 \geq2^{k(3s/2-h)}
 \geq2^{n/2+5s}.
\]
The upper bound
\[
 W\leq2^{k(s/2+2)}.
\]
implies \(A/W\geq2^{k(s-h-2)}\geq4\), which proves \(W\leq A/4\).  Finally,
\[
 \frac{v_i}{w_i}\leq1+2^{-s/2},
 \qquad
 \frac{b_i}{v_i^2}\leq1+4\cdot2^{-s}.
\]
Comparing a term of \(H\) with the corresponding ordered pair of terms in
\(A^2\) introduces at most \(2k\) factors of the first type and \(k\) factors
of the second type.  Since
\(k\leq n\leq2^h\leq2^{s/32}\), the logarithm of their product is at most
\[
 2k\,2^{-s/2}+4k\,2^{-s}<\log 8.
\]
Consequently \(H\leq8A^2\).  Similarly, for \(c\in\{1,3\}\) and any \(k\)-set
\(I\),
\[
 \prod_{i\in I}(q_i+c)
 \leq
 2\sqrt{\prod_{i\in I}q_i}\prod_{i\in I}w_i
 \leq2\sqrt B\prod_{i\in I}w_i.
\]
The factor \(2\) in this inequality follows from the bound
\[
 \frac{q_i+c}{\sqrt{q_i}\,w_i}
 \leq\frac{1+3/q_i}{1-q_i^{-1/2}}
\]
and
\[
 k\left(3\cdot2^{-s}
   +\frac{2^{-s/2}}{1-2^{-s/2}}\right)<\log2,
\]
where the last inequality follows from \(s\geq32h\) and \(k\leq2^h\).
Summing over all sets \(I\) proves \(Z,J\leq2\sqrt B\,A\), and hence
\eqref{eq:rounding-bounds}.

\begin{theorem}[Density of smooth isogenies to conjugates]
\label{thm:density}
Let $p>3$ be prime of bit length $n\geq2^{16}$.  For the deterministic family
$\DD_p$ in~\eqref{eq:degree-family}, at least
\begin{equation}\label{eq:density}
 \boxed{\frac{p}{2^{80}n^4}}
\end{equation}
generic supersingular maximal models admit \((d,+1)\)-structures for at least
two distinct degrees $d\in\DD_p$.  Every such degree satisfies
\begin{equation}\label{eq:smooth-degree-size}
d\leq p^{1/3}\exp\bigl(O(\sqrt{\log p\,\log\log p})\bigr),
\end{equation}
and all its prime factors are
$\exp(O(\sqrt{\log p\,\log\log p}))$.
\end{theorem}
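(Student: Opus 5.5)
The plan is to apply the second part of Lemma~\ref{lem:moments}, with \(\mathcal X\) the set of generic supersingular maximal models and one label \(r_i=r_d\) for each \(d\in\DD_p\).

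\textbf{Setting up the lemma.} Since there are at most \(p/12+2\) supersingular invariants, we may take \(N=p/12+2\leq p/8\). The collision bounds are
\[
U_{de}=2^{50}n^2\sqrt{de}\prod_{\ell\mid\gcd(d,e)}(1+3/\ell),
\]
supplied by Corollary~\ref{cor:COL50}. Its hypotheses hold because \(d,e\leq B\) and \(4B<p\) by~\eqref{eq:B-size}. Squarefreeness is built into~\eqref{eq:degree-family}, and \(p\nmid d\) holds since all prime factors are below \(2^{s+4}<p\).

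\textbf{Lower bound \(S_0\).} Apply Proposition~\ref{prop:first-moment} to the whole family \(\DD_p\), which is allowed since \(p\geq2^{n-1}\geq2^{17}\). Summing~\eqref{eq:first-moment} over all \(d\) except \(d_*\), and using \(\sum_d\sqrt d\geq A\) (from \(\sqrt{q_i}\geq w_i\)), gives
\[
S\geq \frac{\sqrt p\,(A-W)}{80n}-5Z\geq\frac{3\sqrt p A}{320n}-10\sqrt B A .
\]
Here \(\sum_d\Psi(d)=Z\), and~\eqref{eq:A-size} and~\eqref{eq:rounding-bounds} are used. Because \(B\leq 2^{n/3+6s}\) while \(\sqrt p\geq 2^{(n-1)/2}\), we have \(\sqrt B\leq\sqrt p\,2^{-n/3+3s+1/2}\), which is negligible against \(\sqrt p/n\). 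So we may take \(S_0=\sqrt p A/(120n)\), say.

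\textbf{Bounding \(V_0\) and \(K_{\ne}\).} For the diagonal,
\[
\tfrac12\sum_dU_{dd}=2^{49}n^2J\leq 2^{50}n^2\sqrt B A,
\]
again of lower order than \(S_0\) by the factor \(2^{-n/3+3s}\) times polynomial factors, since \(40s\leq n\). Hence \(V_0\geq\sqrt pA/(240n)\). For the off-diagonal, \(K_{\ne}\leq2^{50}n^2(H-H_{\rm diag})\leq2^{53}n^2A^2\) by~\eqref{eq:H-majorizes} and~\eqref{eq:rounding-bounds}.

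\textbf{The key comparison.} We need \(V_0>N\) with room to spare. Using \(A\geq2^{n/2+5s}\geq 2^{5s}\sqrt p\), we get \(V_0\geq p\,2^{5s}/(240n)\), which exceeds \(2N\) by a huge margin. Then \(V_0-N\geq V_0/2\). Also \(K_{\ne}\) dominates \(V_0\), since \(V_0/K_{\ne}\leq 1/(2^{53}n^2 A/\sqrt p)\ll1\). Therefore
\[
\frac{(V_0-N)^2}{V_0+K_{\ne}}\geq\frac{V_0^2}{8K_{\ne}}\geq \frac{pA^2/(240n)^2}{2^{56}n^2A^2}=\frac{p}{2^{56}\cdot 240^2n^4}\geq\frac{p}{2^{80}n^4}.
\]
This is exactly~\eqref{eq:density}: the factors of \(A^2\) cancel, which is why the family is tuned so that \(A\) is about \(\sqrt p\cdot 2^{5s}\) and its members reach size about \(p^{1/3}\).

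\textbf{Size statements.} These follow from~\eqref{eq:B-size} and~\eqref{eq:sk-size}: \(d\leq B\leq 2^{n/3+6s}=p^{1/3}\exp(O(\sqrt{\log p\log\log p}))\), and each prime factor is at most \(2^{s+4}=\exp(O(\sqrt{n\log n}))\).

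\textbf{Main obstacle.} The hard part is the bookkeeping of explicit constants. Specifically, one must check that the subtracted terms \(5Z\) and the diagonal \(J\) term are genuinely dominated by the main term, which needs \(2^{n/6-3s}\gg n^3\cdot 2^{50}\); this follows from \(40s\leq n\) and \(n\geq2^{16}\). One must also check that the constants multiply to at most \(2^{80}\). I would verify these inequalities first, adjusting the choice of \(S_0\) if necessary.
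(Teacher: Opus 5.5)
Your proposal is correct and follows the paper's proof essentially step for step: the second part of Lemma~\ref{lem:moments} with one label per degree, $S_0$ from Proposition~\ref{prop:first-moment} after discarding $d_*$ (with $W\le A/4$ and $5Z\le 10\sqrt B\,A$), the diagonal term via $J$, the off-diagonal term via $H-H_{\rm diag}\le 8A^2$, and $V_0-N\ge V_0/2$, with only the constants differing. The one step to repair is your claim that $K_{\ne}$ dominates $V_0$, which treats the upper bound $K_{\ne}\le 2^{53}n^2A^2$ as if it were a lower bound; instead, as the paper does, bound the denominator directly by $V_0+K_{\ne}\le S_0+2^{53}n^2A^2\le 2^{54}n^2A^2$ (using $V_0\le S_0\le\sqrt p\,A\le A^2$), which yields exactly your final estimate $V_0^2/(2^{56}n^2A^2)$.
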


\begin{proof}
Remove the possible exceptional degree from
Proposition~\ref{prop:first-moment}, and put
\(\rho=\floor{\sqrt p}\).  The sum of the remaining first moments is at least
\begin{equation}\label{eq:S0}
 S_0=\max\left\{0,
 \frac{\rho(A-W)}{80n}-5Z\right\}.
\end{equation}
To justify the numerator, recall that the primes are ordered increasingly.
The degree with the largest value of \(\sqrt d\) also has the largest product
of the \(w_i\).  Removing any one class number term therefore subtracts at
most \(W\) from the lower bound \(A\).  We retain the penalty involving
\(\Psi\) for the removed degree, thereby obtaining a weaker valid bound.

Let $\Gamma=2^{50}n^2$.  Corollary~\ref{cor:COL50} and
\eqref{eq:H-majorizes} bound all collisions by $\Gamma H$, with contributions
at most $\Gamma J$ from equal degrees and at most
$\Gamma(H-H_{\rm diag})$ from distinct degrees.  Lemma~\ref{lem:moments}
therefore applies with
\begin{equation}\label{eq:V0}
 V_0=\max\{0,S_0-\Gamma J/2\}.
\end{equation}

The comparison between the main term and the error terms begins with
Equation~\eqref{eq:B-size}, which gives
\[
 \sqrt{p/B}\geq2^{n/4-1/2}\geq2^{62}n^3.
\]
Combining this inequality with
\eqref{eq:A-size}--\eqref{eq:rounding-bounds} yields
\begin{equation}\label{eq:V0-lower}
 V_0\geq\frac{\sqrt p\,A}{2048n}\geq2p.
\end{equation}
There are fewer than $p$ generic supersingular invariants, and moreover
$V_0\leq S_0\leq\sqrt pA\leq A^2$ with
\[
 V_0+\Gamma(H-H_{\rm diag})\leq16\Gamma A^2.
\]
Applying~\eqref{eq:two-label-moment} and using
$V_0-p\geq V_0/2$ gives
\[
 \#\{E:\text{at least two degrees}\}
 \geq\frac{V_0^2}{64\Gamma A^2}
 \geq\frac{p}{2^{78}n^4}
 \geq\frac{p}{2^{80}n^4}.
\]
Equations~\eqref{eq:sk-size},~\eqref{eq:B-size}, and
\eqref{eq:prime-interval} give~\eqref{eq:smooth-degree-size} and the stated
bound for the prime factors.
\end{proof}

Because the degree family is fixed before the algorithm samples any curve,
the theorem establishes the factorization property required by the search
from the count and collision estimate above.

\section{Proof of the main result}
\label{sec:algorithm}

The density theorem identifies a set of curves that support the required
isogenies.  The algorithm samples a nearly uniform curve, lists short
isogenies from that curve and its conjugate, matches the two lists, and
transports the resulting endomorphism back to the input.

\subsection{Input model}\label{sec:input-normalization}

The density theorem is stated on maximal models, whereas \OneEnd\ permits an
arbitrary supersingular input \(E_{\mathrm{in}}/\F_{p^2}\); the following
reduction passes between these two descriptions.

When \(j(E_{\mathrm{in}})\in\{0,1728\}\), an automorphism of order \(3\) or
\(4\) already gives a non-scalar endomorphism; on suitable Weierstrass models,
one may use
\((x,y)\mapsto(\zeta_3x,y)\) or \((x,y)\mapsto(-x,\sqrt{-1}\,y)\).
The required roots of unity lie in \(\F_{p^2}\), so the rest of the
construction concerns a generic input.

Let \(E_\star\) be the fixed maximal model with
\(j(E_\star)=j(E_{\mathrm{in}})\).  Because the two curves are quadratic
twists, an isomorphism over the algebraic closure
\begin{equation}\label{eq:input-isomorphism}
 \nu:E_{\mathrm{in}}\longrightarrow E_\star
\end{equation}
is defined over an extension of \(\F_{p^2}\) of degree at most two.

For the computation of the model and the isomorphism, set
\(c=j(E_{\mathrm{in}})/(1728-j(E_{\mathrm{in}}))\) and start from
\[
y^2=x^3+3cx+2c,
\]
whose invariant is \(j(E_{\mathrm{in}})\).  Point counting in time polynomial
in \(\log p\) distinguishes this curve from its quadratic twist: we select the
one with \((p+1)^2\) rational points.  Standard arithmetic in finite fields
then recovers \(\nu\), with randomized polynomial time when roots are found by
random sampling.

Conjugation transfers an answer on \(E_\star\) to the original curve: if
\(\alpha_\star\in\End(E_\star)\), then
\(\nu^{-1}\circ\alpha_\star\circ\nu\in\End(E_{\mathrm{in}})\).  This
conjugate descends to \(\F_{p^2}\), since the Galois cocycle of the quadratic
twist takes values in the central subgroup \(\{\pm1\}\) and fixes every
conjugated endomorphism.  Conjugation also preserves non-scalarity, and the
bounded degrees of \(\nu\) and \(\nu^{-1}\) preserve efficient
representability, so the search may start at \(E_\star\).

\subsection{Sampling distribution}

The supersingular $2$-isogeny graph, with its natural automorphism weights, has
stationary distribution
\begin{equation}\label{eq:stationary}
 \pi(E)=\frac{24}{(p-1)\#\operatorname{Aut}(E)}.
\end{equation}
A generic vertex therefore has mass $12/(p-1)$, while the Ramanujan bound gives
nontrivial normalized eigenvalues of absolute value at most $2\sqrt2/3$.
After making the walk lazy, the corresponding bound is smaller than $35/36$;
see~\cite{PW24}.

Let $\mu_t$ be the distribution after $t$ lazy steps from \(E_\star\), where
at each step the walk stays put with probability \(1/2\) and otherwise
chooses one of the three outgoing \(2\)-isogenies uniformly.  Since
$\pi_{\min}>1/p$,
\begin{equation}\label{eq:mixing}
 \|\mu_t-\pi\|_{\rm TV}
 \leq\frac{\sqrt p}{2}\left(\frac{35}{36}\right)^t.
\end{equation}
The choice
\begin{equation}\label{eq:walk-length}
 T=\ceil{\frac{(n/2+81)\log2+4\log n}{\log(36/35)}}.
\end{equation}
makes the bound in~\eqref{eq:mixing} at most \(2^{-82}n^{-4}\).
Theorem~\ref{thm:density} and \eqref{eq:stationary} therefore imply
\begin{equation}\label{eq:good-probability}
 \Pr[\exists d\in\DD_p:\ r_d(E)>0]
 \geq2^{-80}n^{-4}.
\end{equation}
The stationary mass of the set counted by Theorem~\ref{thm:density} is at
least \(12p/((p-1)2^{80}n^4)\), whereas the mixing error is smaller than
\(2^{-82}n^{-4}\).  Consequently, the expected number of trials is
\(O(n^4)\), and computing and storing each walk costs a polynomial in \(n\).

\subsection{Factor splitting}

Once the walk reaches a curve counted by Theorem~\ref{thm:density}, an
isogeny from that curve to its conjugate has degree in \(\DD_p\).  The search
splits its prime factors into two products of similar size, with parameters
\[
 Q=2^{s+4},\qquad B=\max_{d\in\DD_p}d,\qquad
 L=\ceil{\sqrt{QB}}.
\]

\begin{lemma}[Factor splitting]\label{lem:balanced}
Every $d\in\DD_p$ has a factorization $d=uv$ such that
\[
 \gcd(u,v)=1,\qquad u,v\leq L,
\]
and both $u$ and $v$ are squarefree products of primes from the interval
\eqref{eq:prime-interval}.
\end{lemma}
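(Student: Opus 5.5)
We will use a greedy prefix argument on the ordered prime factors. Fix \(d=q_{i_1}q_{i_2}\cdots q_{i_k}\in\DD_p\), whose factors are distinct primes in the interval~\eqref{eq:prime-interval}, listed in any order. For \(0\le j\le k\) put \(P_j=q_{i_1}\cdots q_{i_j}\), so that \(P_0=1\) and \(P_k=d\). Let \(j\) be the largest index with \(P_j\le\sqrt{dQ}\), and set \(u=P_j\) and \(v=d/u\).

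Then \(u\) and \(v\) are products of disjoint subsets of the prime factors of \(d\). Consequently they are squarefree, \(\gcd(u,v)=1\), and every prime factor of either one lies in the interval~\eqref{eq:prime-interval}. A trivial part equal to \(1\) is an empty product, which is still allowed.

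For the upper bound on \(u\), the choice of \(j\) gives directly
\[
u\le\sqrt{dQ}\le\sqrt{QB}\le L .
\]

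For the upper bound on \(v\), split into two cases.

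\emph{Case \(j=k\).} Then \(v=1\le L\).

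\emph{Case \(j<k\).} By maximality of \(j\) we have \(P_{j+1}=u\,q_{i_{j+1}}>\sqrt{dQ}\). Since \(q_{i_{j+1}}\le 2^{s+4}=Q\), this gives
\[
u>\frac{\sqrt{dQ}}{Q}=\sqrt{d/Q}.
\]
Therefore
\[
v=\frac du<\frac{d}{\sqrt{d/Q}}=\sqrt{dQ}\le\sqrt{QB}\le L .
\]

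The only step needing care is the choice of threshold. Splitting at \(\sqrt d\) alone yields \(u\le\sqrt d\) but only \(v<Q\sqrt d\), which is too weak. Splitting at \(\sqrt{dQ}\) puts the loss of at most one prime factor \(Q\) symmetrically: \(u\le\sqrt{dQ}\), and the factor \(Q\) lost when \(u\) stops short is absorbed by \(v\le\sqrt{dQ}\). Both parts then fit under \(L=\ceil{\sqrt{QB}}\), using \(d\le B\). No further obstacle arises, because the primes are distinct, so coprimality and squarefreeness are automatic.
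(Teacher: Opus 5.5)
Your proof is correct. It reaches the paper's bound \(\max(u,v)\le\sqrt{Qd}\) by a slightly different construction.

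The paper uses greedy balancing. It runs through the prime factors and always multiplies the currently smaller of two running products. By induction it maintains that the larger product is at most \(Q\) times the smaller, whence \(\max(u,v)^2\le Q\,uv=Qd\le QB\).

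You instead fix the threshold \(\sqrt{dQ}\) in advance and cut at the longest prefix not exceeding it. This bounds \(u\) by construction and \(v\) through the overshoot \(u\,q_{i_{j+1}}>\sqrt{dQ}\) with \(q_{i_{j+1}}\le Q\).

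Both arguments rest on the single fact that every factor is at most \(Q=2^{s+4}\). Both also give the two-sided balance \(\sqrt{d/Q}\le u,v\le\sqrt{dQ}\). In your version this holds whenever \(j<k\), which is always the case here because \(k\ge5\) forces \(d>Q\).

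The paper's version is symmetric and needs no threshold, at the cost of a short inductive invariant. Yours is a one-step argument whose only delicate point is the choice of threshold, which you handle correctly. The lemma is used only to guarantee that both halves of the split appear among the enumerated isogenies of degree at most \(L\), so the two constructions are interchangeable there.
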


\begin{proof}
Process the prime factors of $d$ in any order, multiplying at each step the
smaller of two products by the next factor.  The ratio of the larger product
to the smaller one is then at most the largest factor processed so far and
hence at most $Q$, which at the end gives
$\max(u,v)^2\leq Qd\leq QB$.  Assigning each prime to only one product also
gives $\gcd(u,v)=1$.
\end{proof}

The lemma reduces the search to two lists, obtained by enumerating from
\(E\) and \(E^{(p)}\) all cyclic isogenies of degree at most \(L\) whose
degrees are squarefree products of primes from~\eqref{eq:prime-interval}.
Because the Frobenius of degree \(p^2\) is the scalar \(-p\), every subgroup
of order prime to \(p\) is stable under Galois; the quotient maps are therefore
defined over \(\F_{p^2}\), and their targets are maximal.

For every target \(C\), store a key \(\kappa(C)\) that identifies its
isomorphism class over \(\F_{p^2}\), consisting of its \(j\)-invariant and
its quadratic, quartic, or sextic twist class, as appropriate.  Each record
also stores the set of prime factors and the chain of quotient maps, and the
number of records in either list is at most
\begin{equation}\label{eq:record-count}
 \sum_{a\leq L}\Psi(a)=O(L^2).
\end{equation}
The bound follows from
$\Psi(a)=a\sum_{r\mid a}\mu^2(r)/r$ by summing first over $r$; restricting the
allowed factors can only reduce the number of records.

Standard routines based on V\'elu formulas or modular polynomials enumerate
each extension of prime degree in time polynomial in \(Q\) and \(n\), while
standard isomorphism tests over finite fields recover an isomorphism between
curves with equal keys in probabilistic polynomial time.  In particular, the
key \(\kappa\) distinguishes the nonisomorphic twists over \(\F_{p^2}\) that
share the same \(j\)-invariant.

\begin{algorithm}[H]
\caption{Isogenies from one curve:
  \(\operatorname{ListIsogenies}(E,L,\mathcal Q)\)}
\label{alg:half-isogenies}
\begin{algorithmic}[1]
\Require A supersingular curve \(E\), a bound \(L\), and the ordered prime set
         \(\mathcal Q=\{q_1,\ldots,q_m\}\)
\Ensure One record for every cyclic squarefree isogeny supported on
        \(\mathcal Q\) and of degree at most \(L\)
\State \(\mathcal L\gets\varnothing\)
\ForAll{squarefree \(a\leq L\) with \(\Supp(a)\subseteq\mathcal Q\)}
  \ForAll{cyclic subgroups \(G\subset E[a]\) of order \(a\)}
    \State construct the quotient chain \(\lambda:E\to E/G\), processing
           \(\Supp(a)\) in increasing order
    \State append
      \((\kappa(E/G),E/G,|\Supp(a)|,\Supp(a),\lambda)\) to \(\mathcal L\)
  \EndFor
\EndFor
\State \Return \(\mathcal L\)
\end{algorithmic}
\end{algorithm}

Processing the primes in a fixed order prevents several permutations of one
chain from representing the same cyclic kernel.  The procedure is an
abstract description of the usual enumeration by a tree of isogenies.  The
complexity analysis counts every record and allows a polynomial amount of
work in \(Q\) and \(n\) for every edge of prime degree.

A meeting of records
\[
 \lambda:E\longrightarrow C,\qquad
 \rho:E^{(p)}\longrightarrow C
\]
gives
\begin{equation}\label{eq:meet-composition}
 \varphi=\widehat\rho\circ\lambda:E\longrightarrow E^{(p)}.
\end{equation}
If the two supports are disjoint and have total cardinality $k$, then
$\deg\varphi\in\DD_p$ and its kernel is cyclic.  An isomorphism is inserted
when the records use different models of the same endpoint.

\subsection{List matching}

A target curve can occur many times in both lists, and testing every pair of
records with that target would exceed the claimed complexity.  Color
coding~\cite{AYZ95} selects compatible records without forming this Cartesian
product.

Independently color each prime in~\eqref{eq:prime-interval} left or right.
Retain on the $E$ side only records all of whose factors are colored left, and
on the $E^{(p)}$ side only records all of whose factors are colored right.
Group the retained records by target and by the number of prime factors.
Every factor on the first side now has the left color, while every factor on
the second side has the right color, so their sets of factors are disjoint and
only one record is needed for each target and each possible number of factors.

Fix a $(d,+1)$-structure with $d\in\DD_p$ and a factorization from
Lemma~\ref{lem:balanced}.  The probability that all $k$ factors receive the
prescribed colors is $2^{-k}$.  With
\begin{equation}\label{eq:color-repetitions}
 R=\ceil{2^k\log4}
\end{equation}
independent colorings, the two chosen factors are placed on the required sides
with probability at least $3/4$.  Because the lists of isogenies are reused,
each new coloring requires only a scan of their records and new hash tables.

Before accepting a proposed match, the algorithm checks each quotient map in
the two chains, verifies the isomorphism between the targets on their curve
equations, and checks both sets of factors.  Opposite colors make these sets
disjoint, and their union must contain \(k\) primes; the composite is therefore
a separable isogeny of squarefree degree in \(\DD_p\) with cyclic kernel.

These checks operate directly on the stored chains, with a rational map of
degree \(p^{1/3+o(1)}\) represented compositionally.  Since $4B<p$,
Proposition~\ref{prop:automatic} shows that the composite has sign \(+1\).
A failed matching attempt triggers a restart, while every returned map has
passed all checks.

\begin{algorithm}[H]
\caption{A separable isogeny to the conjugate}
\label{alg:conjugate-isogeny}
\begin{algorithmic}[1]
\Require A normalized supersingular model \(E/\F_{p^2}\)
\Ensure A separable cyclic isogeny \(\varphi:E\to E^{(p)}\) defining a
        \((d,+1)\)-structure for some \(d\in\DD_p\), or \(\bot\)
\State construct \(s,k,\mathcal Q,Q,B,L\) from
       \eqref{eq:parameters}, \eqref{eq:prime-interval}, and
       \eqref{eq:degree-family}
\State \(\mathcal L\gets\Call{ListIsogenies}{E,L,\mathcal Q}\)
\State \(\mathcal R\gets\Call{ListIsogenies}{E^{(p)},L,\mathcal Q}\)
\For{\(r=1,\ldots,\ceil{2^k\log4}\)}
  \State independently color every \(q\in\mathcal Q\) left or right
  \State make a hash table containing one record
         \((\kappa(C),C,a,S,\lambda)\in\mathcal L\) for every key
         \((\kappa(C),a)\) with all primes in \(S\) colored left
  \ForAll{\((\kappa(C'),C',b,T,\rho)\in\mathcal R\) with all primes in \(T\)
          colored right}
    \If{the table contains a record with key \((\kappa(C'),k-b)\)}
      \State retrieve \((\kappa(C),C,k-b,S,\lambda)\) and an isomorphism
             \(\iota_C:C\to C'\)
      \State \(\varphi\gets\widehat\rho\circ\iota_C\circ\lambda\)
      \If{both chains and \(\iota_C\) verify, and
          \(|S|+|T|=k\)}
        \State \Return \(\varphi\)
      \EndIf
    \EndIf
  \EndFor
\EndFor
\State \Return \(\bot\)
\end{algorithmic}
\end{algorithm}

\subsection{Complexity}

Equations~\eqref{eq:sk-size},~\eqref{eq:B-size}, and the definition of $Q$
give
\begin{align*}
 L^2&\leq4QB
 =p^{1/3}\exp\bigl(O(\sqrt{\log p\,\log\log p})\bigr),\\
 2^k&=\exp\bigl(O(\sqrt{\log p/\log\log p})\bigr),\\
 \operatorname{poly}(Q,n)
 &=\exp\bigl(O(\sqrt{\log p\,\log\log p})\bigr).
\end{align*}
Consequently one trial at a sampled curve, including all colorings, costs
\begin{equation}\label{eq:trial-cost}
p^{1/3}\exp\bigl(O(\sqrt{\log p\,\log\log p})\bigr)
\end{equation}
time and at most the same amount of memory.  The \(O(n^4)\) expected trials
from~\eqref{eq:good-probability} are absorbed by the displayed bound.

\subsection{Endomorphism construction}

For a recovered conjugate isogeny define
\begin{equation}\label{eq:one-end}
 \alpha_E=\pi_{E^{(p)}}\circ\varphi\in\End(E)\setminus\Z.
\end{equation}
Its inseparable degree is $p$, whereas the valuation at \(p\) of the
inseparable degree of a scalar multiplication is even; hence $\alpha_E$ is
non-scalar.

Let $\omega:E_\star\to E$ be the separable part of the sampled walk, including
the final isomorphism to the fixed model.  Then
\begin{equation}\label{eq:transport}
 \alpha_\star
 =\widehat\omega\circ\alpha_E\circ\omega
 \in\End(E_\star)\setminus\Z
\end{equation}
is non-scalar: after division by the central scalar $\deg\omega$, it is the
conjugate of $\alpha_E$ in $\End^0(E_\star)$.  If the walk contains \(r\leq T\)
nontrivial steps, then
\begin{equation}\label{eq:output-degree}
 \deg\alpha_\star=2^{2r}p\deg\varphi\leq2^{2T}pB,
 \qquad \log\deg\alpha_\star=O(n).
\end{equation}

\subsection{Output representation}
\label{sec:output-representation}

\begin{lemma}[Representation of the recovered map]\label{lem:compression}
The chain representing \(\varphi\) can be converted, within the time bound
\eqref{eq:trial-cost}, to an efficient representation, and consequently
\(\nu^{-1}\circ\alpha_\star\circ\nu\) has an efficient representation of
polynomial length in \(n\).
\end{lemma}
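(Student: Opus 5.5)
The plan is to build the efficient representation of \(\alpha_\star\) as a composition of pieces, each of which already has an efficient representation, and then to conjugate by \(\nu\). The chain for \(\varphi\) is a sequence of at most \(k\) quotient maps of prime degree at most \(Q\) for \(\lambda\), followed by the isomorphism \(\iota_C\), followed by the dual of the chain for \(\rho\).

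First, each prime-degree step \(E_i\to E_{i+1}\) of degree \(q\leq Q\) is stored by its kernel polynomial, or by a generator of its kernel over an extension of degree \(O(q)\). Vélu's formulas evaluate it in time \(\operatorname{poly}(q,n)\). The dual of a prime-degree step of degree \(q\) is again a quotient map of degree \(q\), namely the quotient by the image of \(E_i[q]\). It is computable within \(\operatorname{poly}(q,n)\) by recovering the kernel of the dual, and normalising the final isomorphism is a check on curve equations. So \(\widehat\rho\) is a chain of at most \(k\) prime-degree steps. This gives an evaluator for \(\varphi\) whose description has length \(O(k\cdot\operatorname{poly}(Q,n))\). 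That length is subexponential rather than polynomial in \(\log\deg\varphi+\log p\), and this is the main obstacle: Definition~\ref{def:efficient} demands polynomial length and polynomial-time evaluation, while a degree-\(q\) Vélu step costs \(\operatorname{poly}(q)\) with \(q\) as large as \(2^{s+4}\).

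To remove the dependence on \(Q\), I would use isogeny interpolation, as in~\cite{Wes26,MW25} (the higher-dimensional representation of Robert, or its Kani-lemma variant). Choose a smooth modulus \(N>\deg\varphi\) with \(N\) prime to \(p\), with \(E[N]\) defined over a small extension, and with \(N-\deg\varphi\) decomposable as required by Kani's lemma. Compute \(\varphi(P),\varphi(Q')\) for a basis \(P,Q'\) of \(E[N]\) by running the chain once. Each step costs \(\operatorname{poly}(Q,n)\) times the cost of extension-field arithmetic, and this stays within~\eqref{eq:trial-cost}. The data \((E,E^{(p)},\deg\varphi,P,Q',\varphi(P),\varphi(Q'))\) has length polynomial in \(n\). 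The interpolation algorithm then evaluates \(\varphi\) at any point in polynomial time, and validity can be checked in polynomial time. If the field-of-definition constraints are awkward, I would instead factor \(\deg\varphi\) as in~\cite{MW25} and interpolate along several coprime torsion subgroups.

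Next, \(\pi_{E^{(p)}}\) is the coefficientwise \(p\)-power map, which has an obvious polynomial evaluator. The walk \(\omega\) and its dual \(\widehat\omega\) are chains of \(r\leq T=O(n)\) steps of degree \(2\). Composing efficient representations yields an efficient representation: the evaluator runs the parts in sequence, the degrees multiply, and the total length stays polynomial since \(\log\deg\alpha_\star=O(n)\) by~\eqref{eq:output-degree}. Finally \(\nu\) and \(\nu^{-1}\) are isomorphisms defined over an extension of degree at most two, with explicit formulas. Conjugating by them gives an evaluator for \(\nu^{-1}\circ\alpha_\star\circ\nu\), which descends to \(\F_{p^2}\) as shown in \cref{sec:input-normalization}, again of polynomial length in \(n\).
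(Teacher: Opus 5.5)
Your route is the paper's: run the stored chain on torsion points, pass the images to IsogenyInterpolation of~\cite{MW25}, and compose with \(\pi_{E^{(p)}}\), \(\omega\), \(\widehat\omega\) and \(\nu^{\pm1}\). Your handling of the duals in \(\widehat\rho\) also fills in a point the paper leaves implicit.

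The gap is in the one step with real content, the choice of \(N\). You ask for a single smooth \(N>\deg\varphi\) with a basis of \(E[N]\) over a small extension. You give no construction, and there is no evident unconditional one. On a maximal model the \(p^2\)-Frobenius is \([-p]\), so \(E[N]\) is defined over the extension of \(\F_{p^2}\) of degree \(\operatorname{ord}_N(-p)\). A small degree \(k\) forces \(N\mid(-p)^k-1\), for instance \(N\mid p+1\) when \(k=1\). Requiring such a number to have a smooth divisor larger than \(d\approx p^{1/3}\) is exactly a smoothness assumption on \(p\), which the paper is built to avoid. If instead \(N\) is a product of many small primes, a basis of the whole group \(E[N]\) lives over an extension of degree \(\operatorname{lcm}_{\ell\mid N}\operatorname{ord}_\ell(-p)\), which can be exponential in \(n\).

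Your extra condition on \(N-\deg\varphi\) should be dropped. In the four-squares (dimension~8) form it is vacuous, and the cited interpolation needs only \(N>d\) with \(\gcd(N,pd)=1\). Coprimality to \(d\) is a condition you omitted.

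Your fallback, interpolation along coprime torsion subgroups, is the right repair. However, what gets split is \(N\), not \(\deg\varphi\), and the fallback needs the two facts the paper supplies:
\begin{itemize}
\item Take \(N\) squarefree, the product of the primes up to \(Cn\). It is prime to \(pd\) automatically here, since the prime factors of \(d\) exceed \(2^s\). Chebyshev's estimates then give \(N>d\), \(\log N=O(n)\) and \(P^+(N)=O(n)\).
\item Because the Frobenius is the scalar \([-p]\), each component \(E[\ell]\) is defined over an extension of degree \(\operatorname{ord}_\ell(-p)\mid\ell-1=O(n)\). So every prime component has a basis computable in polynomial time. Pushing these \(O(n)\) points through the chain costs \(\operatorname{poly}(Q,n)\), within~\eqref{eq:trial-cost}.
\end{itemize}
Supplying bases prime by prime is exactly the input format of the interpolation in~\cite{MW25}. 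With this choice of \(N\), the rest of your argument goes through.
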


\begin{proof}
Let \(d=\deg\varphi\).  IsogenyInterpolation requires an integer \(N>d\)
that is coprime to \(pd\), bases for the components of \(E[N]\) at each prime,
and their images under \(\varphi\) \cite[Proposition~2]{MW25}.  We can choose
such a squarefree integer \(N\) with
\begin{equation}\label{eq:interpolation-modulus}
 \log N=O(n),\qquad P^+(N)=O(n),
\end{equation}
where \(P^+\) denotes the largest prime factor.  To construct \(N\), take the
product of the primes up to \(Cn\), for a sufficiently large absolute
constant \(C\), and remove the primes that divide \(pd\).  Chebyshev's
estimates~\cite{RS62} show that the logarithm of the original product is
linear in \(Cn\), whereas the removed primes have total logarithm at most
\(\log(pd)=O(n)\); increasing \(C\) therefore leaves a product larger than
\(d\).

On a maximal model, the Frobenius of degree \(p^2\) is the scalar \(-p\).
For each prime \(\ell\mid N\), a basis of \(E[\ell]\) is therefore defined
over an extension of degree at most \(\ell-1=O(n)\) and can be computed in
time polynomial in \(n\) and \(\ell\); compare~\cite[Lemma~6]{MW25}.

Evaluate the stored chain for \(\varphi\) on these bases, and then invoke
IsogenyInterpolation.  There are only polynomially many basis points, so the
evaluations multiply the search cost by a polynomial in \(n\).  The
interpolation is polynomial in its input length and in \(P^+(N)\), and both
costs are absorbed by~\eqref{eq:trial-cost}.

The map \(\pi_{E^{(p)}}\), the \(O(n)\) factors of degree two in \(\omega\)
and \(\widehat\omega\), and the bounded degree isomorphisms
\(\nu,\nu^{-1}\) all have efficient evaluators.  Composing these evaluators
with the representation of \(\varphi\) gives an efficient representation of
\(\nu^{-1}\circ\alpha_\star\circ\nu\).  All operations use the input curve,
the stored chains, finite-field arithmetic, and IsogenyInterpolation.
\end{proof}

\begin{algorithm}[H]
\caption{A non-scalar endomorphism}
\label{alg:one-end}
\begin{algorithmic}[1]
\Require A supersingular elliptic curve \(E_{\mathrm{in}}/\F_{p^2}\)
\Ensure A non-scalar
        \(\alpha_{\mathrm{in}}\in\End(E_{\mathrm{in}})\setminus\Z\) in
        efficient representation
\State \(n\gets\floor{\log_2p}+1\) and compute \(T\) from
       \eqref{eq:walk-length}
\If{\(j(E_{\mathrm{in}})=0\) or \(1728\)}
  \State \Return an explicit automorphism of order \(3\) or \(4\)
\EndIf
\State compute the maximal model \(E_\star\) and the isomorphism
       \(\nu:E_{\mathrm{in}}\to E_\star\) from
       \eqref{eq:input-isomorphism}
\Loop
  \State take a lazy \(2\)-isogeny walk
         \(\omega:E_\star\to E\) of length \(T\), retaining its chain
  \If{\(j(E)=0\) or \(1728\)}
    \State \textbf{continue}
  \EndIf
  \State replace \(E\) by the fixed normalized model of its invariant and
         append the corresponding isomorphism to \(\omega\)
  \State \(\varphi\gets\Call{IsogenyToConjugate}{E}\)
  \If{\(\varphi\neq\bot\)}
    \State convert \(\varphi\) to an efficient representation using
           Lemma~\ref{lem:compression}
    \State \(\alpha_E\gets\pi_{E^{(p)}}\circ\varphi\)
    \State \(\alpha_\star\gets
           \widehat\omega\circ\alpha_E\circ\omega\)
    \State \(\alpha_{\mathrm{in}}\gets
           \nu^{-1}\circ\alpha_\star\circ\nu\)
    \State \Return \(\alpha_{\mathrm{in}}\)
  \EndIf
\EndLoop
\end{algorithmic}
\end{algorithm}

\begin{theorem}[Main algorithm]
\label{thm:algorithm}
There is a classical Las Vegas algorithm which, given a supersingular elliptic
curve $E_{\mathrm{in}}/\F_{p^2}$, returns a non-scalar endomorphism
\(\alpha_{\mathrm{in}}\in\End(E_{\mathrm{in}})\setminus\Z\) in efficient
representation in expected time and memory
\begin{equation}\label{eq:main-complexity}
 \boxed{
p^{1/3}\exp\bigl(O(\sqrt{\log p\,\log\log p})\bigr)
 =p^{1/3+o(1)}.}
\end{equation}
The output satisfies \(\log\deg\alpha_{\mathrm{in}}=O(\log p)\), and the
analysis is unconditional and independent of smoothness or statistical
independence heuristics.
\end{theorem}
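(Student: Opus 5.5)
The proof assembles the pieces already established, in the order in which Algorithm~\ref{alg:one-end} executes them, and checks three things: correctness of every returned output (the Las Vegas property), a per-iteration success probability bounded below by an inverse polynomial in \(n\), and a per-iteration cost within \eqref{eq:main-complexity}. First we dispose of the boundary cases. For \(p\) of bit length \(n<2^{16}\) the whole problem is of constant size and any exhaustive method suffices. The invariants \(0\) and \(1728\) are handled by the explicit automorphisms of \cref{sec:input-normalization}. Otherwise we pass to \(E_\star\) through \(\nu\), at randomized polynomial cost. Correctness then follows from the deterministic checks in Algorithm~\ref{alg:conjugate-isogeny}. A returned \(\varphi\) is a verified separable cyclic isogeny \(E\to E^{(p)}\), so \(\alpha_E\) in \eqref{eq:one-end} is non-scalar because its inseparable degree is \(p\). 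By \eqref{eq:transport}, \(\alpha_\star\) is non-scalar, and conjugation by \(\nu\) preserves this. The degree bound \(\log\deg\alpha_{\mathrm{in}}=O(\log p)\) is \eqref{eq:output-degree}, since \(\nu\) has bounded degree. The efficient representation comes from Lemma~\ref{lem:compression}.

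Next comes the success probability of one loop iteration. Condition on the walk endpoint \(E\) lying in the set of Theorem~\ref{thm:density}. By \eqref{eq:good-probability}, this happens with probability at least \(2^{-80}n^{-4}\) minus the mixing error. On such an \(E\) there is a cyclic isogeny \(\varphi:E\to E^{(p)}\) of some degree \(d\in\DD_p\). Lemma~\ref{lem:balanced} writes \(d=uv\) with coprime \(u,v\le L\) supported on \(\mathcal Q\). Factoring \(\varphi\) through \(C=E/\ker\varphi[u]\) gives \(\lambda:E\to C\) of degree \(u\), and \(\rho=\widehat{\eta}:E^{(p)}\to C\) of degree \(v\), where \(\eta\) is the remaining factor. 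Both are cyclic with squarefree degrees, so both occur as records in ListIsogenies. A single coloring puts \(\Supp(u)\) on the left and \(\Supp(v)\) on the right with probability \(2^{-k}\), so with \(R\) colorings this fails with probability at most \(1/4\). In the successful coloring, when \(\rho\) is scanned the table holds some record with key \((\kappa(C),k-|\Supp(v)|)\). It need not be \(\lambda\) itself, but any such record passes the checks: its support is left-colored and hence disjoint from \(T\), the supports total \(k\) primes, and so the composite has degree in \(\DD_p\) and a cyclic kernel. The iteration therefore returns some valid map. The success probability per iteration is \(\Omega(n^{-4})\), and the expected number of iterations is \(O(n^4)\).

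The cost of one iteration is the walk at \(\operatorname{poly}(n)\), the two lists with \(O(L^2)\) records each by \eqref{eq:record-count} at \(\operatorname{poly}(Q,n)\) per edge, \(R\) scans with hash tables, verification, and interpolation. Each piece is bounded by \eqref{eq:trial-cost}, using the displayed bounds on \(L^2\), \(2^k\) and \(\operatorname{poly}(Q,n)\). Memory is dominated by the lists and has the same bound. Multiplying by the \(O(n^4)\) expected iterations is absorbed into the subexponential factor, which gives \eqref{eq:main-complexity}. The step needing most care is the matching argument above, namely that a stored representative other than \(\lambda\) still yields an acceptable \(\varphi\). The disjointness forced by opposite colors is what makes this work. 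Unconditionality holds because the only analytic inputs are Theorem~\ref{thm:density}, which rests on Tatuzawa's theorem, the collision theorem and the Ramanujan bound, and none of these is heuristic.
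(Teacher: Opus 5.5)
Your proposal is correct and follows the paper's proof essentially step for step. It handles the boundary cases \(n<2^{16}\) and \(j\in\{0,1728\}\), obtains the Las Vegas property from the deterministic checks, takes the success probability from \eqref{eq:good-probability} together with the coloring bound \(3/4\), and multiplies the per-trial cost \eqref{eq:trial-cost} by the \(O(n^4)\) expected trials. Your remark that the stored record with key \((\kappa(C),k-|T|)\) need not be \(\lambda\), yet still gives a cyclic isogeny with degree in \(\DD_p\), is the same point the paper makes in its color-coding paragraph before Algorithm~\ref{alg:conjugate-isogeny}.
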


\begin{proof}
The two special invariants terminate in polynomial time, and
\cref{sec:input-normalization} reduces every other input to a generic maximal
model without changing the asserted exponent.  Theorem~\ref{thm:density} and
the mixing estimate give the success probability
\eqref{eq:good-probability}.  Conditional on reaching a curve in
the set from Theorem~\ref{thm:density}, choose one supported
\((d,+1)\)-structure and split its degree as in Lemma~\ref{lem:balanced}.
Algorithm~\ref{alg:half-isogenies} contains the two isogenies determined by
this split, and the analysis of the coloring step shows that
Algorithm~\ref{alg:conjugate-isogeny} recovers a valid match with probability
at least \(3/4\).

Every returned map passes the deterministic checks described above, while an
unsuccessful coloring or sampled curve causes repetition, so correctness is
deterministic.  Equation~\eqref{eq:trial-cost} and the polynomial expected
number of trials give~\eqref{eq:main-complexity}.  Equations
\eqref{eq:one-end}--\eqref{eq:output-degree} prove that the output of
Algorithm~\ref{alg:one-end} is a non-scalar endomorphism of the original curve,
and Lemma~\ref{lem:compression} gives the required efficient representation.

The density theorem assumes \(n\geq2^{16}\), leaving only finitely many
smaller input sizes.  An exhaustive search in the supersingular isogeny graph
handles them, and enlarging the absolute constant in the
\(O(\,\cdot\,)\) notation absorbs their cost, proving the theorem for every
prime \(p>3\).
\end{proof}

\begin{corollary}\label{cor:equivalent-problems}
The supersingular endomorphism ring problem and the supersingular isogeny
problem admit classical probabilistic algorithms with the same
$p^{1/3+o(1)}$ time and memory bound.
\end{corollary}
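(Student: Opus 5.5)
The plan is to compose Theorem~\ref{thm:algorithm} with the unconditional polynomial-time reductions of Page--Weso{\l}owski and Herl\'edan Le Merdy--Weso{\l}owski~\cite{PW24,MW25}.

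\emph{Endomorphism ring.} I will use the reduction from \EndRing\ to the bounded problem $\OneEnd_\lambda$ of~\cite{MW25}. It makes polynomially many calls, in $\log p$, to a $\OneEnd_\lambda$ oracle. The oracle is invoked on curves produced from the input by walks of polynomial length, together with polynomial-time post-processing of the returned efficient representations.

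Theorem~\ref{thm:algorithm} provides such an oracle with $\lambda(x)=O(x)$, since its output satisfies $\log\deg\alpha=O(\log p)$. Its output is in efficient representation, which matches the interface fixed in Definition~\ref{def:efficient}. Each oracle call costs expected time and memory $p^{1/3}\exp(O(\sqrt{\log p\log\log p}))$. A polynomial number of calls, plus polynomial overhead, therefore stays within $p^{1/3+o(1)}$, because $\operatorname{poly}(\log p)$ factors are absorbed into the subexponential factor. The algorithm inherits the reduction's randomness, so it is probabilistic, with the same expected bound by linearity of expectation.

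\emph{Isogeny.} Given $E,E'$, I will first run the \EndRing\ algorithm on both curves. I will then apply the polynomial-time unconditional reduction~\cite{PW24,MW25}, which computes an isogeny, in efficient representation, from the two endomorphism rings via the Deuring correspondence and KLPT-type ideal-to-isogeny translation.

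The main point to verify is that the reductions really need only the degree bound $O(\log p)$ and efficient-representation outputs, and do not depend on a distribution or further hypotheses on the oracle. I expect the reductions of~\cite{MW25} to be stated for any $\OneEnd_\lambda$ solver with $\lambda$ polynomial, and to be unconditional. I also need memory to stay bounded. Calls are sequential and reuse space, so peak memory is one call's memory plus polynomial storage, which is again $p^{1/3+o(1)}$.
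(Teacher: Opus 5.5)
Your proposal is correct and is essentially the paper's proof: the bound $\log\deg\alpha=O(\log p)$ from \eqref{eq:output-degree} makes Theorem~\ref{thm:algorithm} a solver for \(\OneEnd_\lambda\) with linear \(\lambda\), and the unconditional reductions of~\cite{PW24,MW25}, which have polynomial overhead, transfer the \(p^{1/3+o(1)}\) bound to \EndRing\ and \Isogeny. Only the citations need adjusting: the paper takes the \EndRing-to-bounded-\OneEnd\ reduction from \cite[Theorem~7.1]{PW24}, and the passage to \Isogeny\ should be cited as the unconditional equivalence \cite[Theorem~1.1]{MW25} rather than described as a KLPT-type translation, because the original KLPT algorithm is heuristic.
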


\begin{proof}
Equation~\eqref{eq:output-degree} solves the bounded problem
\(\OneEnd_\lambda\) with \(\lambda(n)=O(n)\).  The reduction from
\EndRing\ to bounded \OneEnd\ of Page--Wesolowski and the unconditional
equivalences of Herl\'edan Le Merdy--Wesolowski
\cite[Theorem~7.1]{PW24}\cite[Theorem~1.1]{MW25} have overhead polynomial in
\(n\) and \(\lambda(n)\), which is absorbed by
\eqref{eq:main-complexity}.  This corollary concerns \Isogeny\ with
unrestricted output degree; a path whose prime degree is prescribed in
advance lies outside its statement.
\end{proof}

\section{Conclusion}

We have proved an unconditional \(p^{1/3+o(1)}\) bound for the time and memory
of \OneEnd; the known reductions give the same bound for \EndRing\ and
\Isogeny.  This result attains the exponent \(1/3\) without the smoothness
assumption used in the previous algorithm.

The main step is the collision bound of Theorem~\ref{thm:collision}.  Two
isogenies from one curve to its conjugate determine oriented embeddings of
quadratic orders, and squarefree degrees prevent a path from descending and
then ascending at the same conductor prime, so the embeddings can be compared
inside a common order.  The possible coincidences lie in a fiber of the square
map on its class group, controlled by the \(2\)-torsion, while primes shared
by the two degrees contribute the factor
\[
 \prod_{\ell\mid\gcd(d,e)}\left(1+\frac3\ell\right).
\]

This collision bound turns the Chenu--Smith count into a statement about many
distinct curves.  Because the degree family is chosen before sampling and
every degree in it is a product of small primes, the algorithm can enumerate
the required isogenies from their known factorizations.

Potential refinements concern the constants and memory use: a more detailed
analysis of ideals above \(2\) and of ambiguous forms would reduce the
constant in the collision bound, while a different method for matching the
two lists might reduce the \(p^{1/3+o(1)}\) memory requirement, with both
changes preserving the unconditional exponent established here.

\bibliography{references}

\end{document}